\documentclass[msom,sglanonrev]{informs4}

\OneAndAHalfSpacedXII 

\usepackage{natbib}
 \bibpunct[, ]{(}{)}{,}{a}{}{,}%
 \def\bibfont{\small}%
\TheoremsNumberedThrough     
\ECRepeatTheorems

\EquationsNumberedBySection 

\MANUSCRIPTNO{}

\usepackage{setspace}
\usepackage{etoolbox}
\usepackage{booktabs} 
\usepackage{tikz}
\usetikzlibrary{shapes,arrows}
\usepackage{pgfplots}
\pgfplotsset{compat=newest}
\usepgfplotslibrary{dateplot}
\usepackage{float} 
\usepackage{bm} 
\usepackage{pdfpages} 
\makeatletter
\@ifundefined{KV@Gin@artifact}{\define@key{Gin}{artifact}[true]{}}{}
\makeatother
\usepackage{hyperref}	
\usepackage{longtable}
\hypersetup{
	colorlinks=true,
	linkcolor=magenta,
	filecolor=magenta,      
	urlcolor=magenta,
	citecolor = magenta,
}
\usepackage{wrapfig}

\usepackage{bigstrut}
\usepackage[lined, noend]{algorithm2e}

\usepackage{array}
\newcolumntype{L}[1]{>{\raggedright\let\newline\\\arraybackslash\hspace{0pt}}m{#1}}
\newcolumntype{C}[1]{>{\centering\let\newline\\\arraybackslash\hspace{0pt}}m{#1}}
\newcolumntype{R}[1]{>{\raggedleft\let\newline\\\arraybackslash\hspace{0pt}}m{#1}}

\newcommand{\Inf}{$\infty$}

\begin{document}

%
\begingroup
\hoffset=0pt \voffset=0pt\relax
\includepdf[pages=-,fitpaper=true,noautoscale=true]{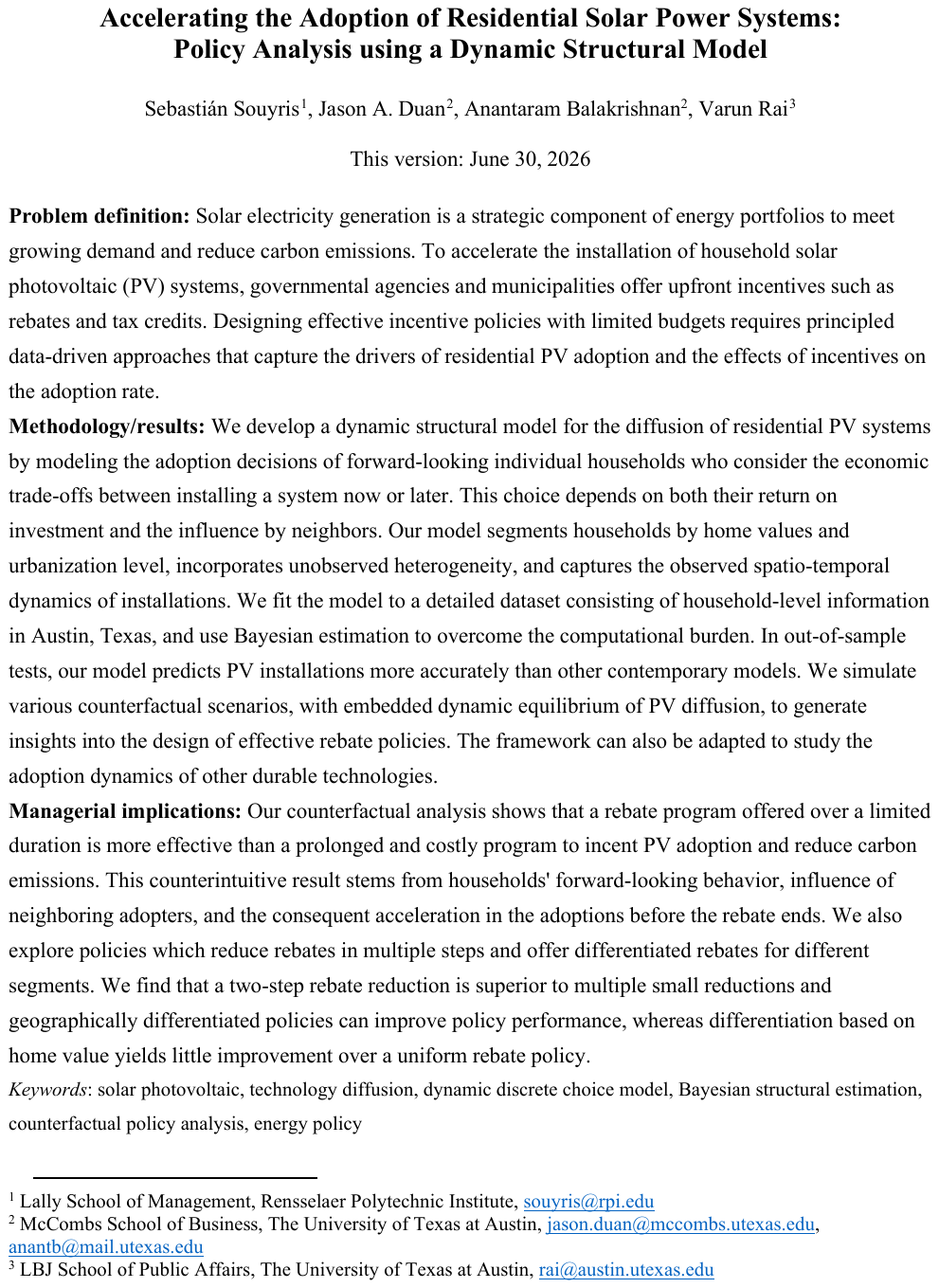}
\endgroup


\newcommand{\MyColaWidth}{1.53}
\newcommand{\MyColbWidth}{4.5}

\newcommand\smallO{
\mathchoice"
  {{\scriptstyle\mathcal{O}}}
  {{\scriptstyle\mathcal{O}}}
  {{\scriptscriptstyle\mathcal{O}}}
  {\scalebox{.7}{$\scriptscriptstyle\mathcal{O}$}}
}
\newcommand\smallM{
\mathchoice
  {{\scriptstyle\mathcal{M}}}
  {{\scriptstyle\mathcal{M}}}
  {{\scriptscriptstyle\mathcal{M}}}
  {\scalebox{.7}{$\scriptscriptstyle\mathcal{M}$}}
}

\newcommand{\sProperties}{I}

\newcommand{\pvSize}{w}
\newcommand{\pvGen}{q}
\newcommand{\df}{\beta}

\newcommand{\pvPrice}{p}
\newcommand{\rebate}{r}
\newcommand{\pvCost}{c}

\newcommand{\vos}{v}
\newcommand{\instBase}{H}
\newcommand{\sInstBase}{h}

\newcommand{\latentUtility}{\varepsilon}
\newcommand{\randomEffects}{\xi}

\newcommand{\pBaseUtility}{\rho}
\newcommand{\pMMoney}{\alpha}
\newcommand{\pMPE}{\gamma}

\newcommand{\fValue}{V} 
\newcommand{\fValueCE}{\mathcal{V}}
\newcommand{\npv}{\text{NPV}} 

\newcommand{\pSeries}{\lambda}




\newcommand{\instBaseNew}{H^{\texttt{new}}}
\newcommand{\funCost}{C}
\newcommand{\pLossFactor}{l}
\newcommand{\psigma}{\sigma}
\newcommand{\funUtilityElectricity}{u}
\newcommand{\stateElectricityRate}{c}
\newcommand{\pPremium}{\rho}
\newcommand{\iLocation}{x}
\newcommand{\pAppraisedVal}{AVal}
\newcommand{\dfTS}{F}
\newcommand{\stParDistance}{\theta}
\newcommand{\stateReliability}{R}
\newcommand{\statePerEffect}{X}
\newcommand{\randC}{\xi}
\newcommand{\stateConsumption}{q}
\newcommand{\iTime}{t}
\newcommand{\iPropertyi}{i}
\newcommand{\iPropertyj}{j}
\newcommand{\sTime}{\mathcal{T}}
\newcommand{\sWindow}{\mathcal{W}}
\newcommand{\sSubsetWindow}{\mathcal{O}}
\newcommand{\iCell}{\mathcal{A}}
\newcommand{\sX}{\mathcal{X}}
\newcommand{\sY}{\mathcal{Y}}
\newcommand{\Property}{P}
\newcommand{\sInstallers}{L}
\newcommand{\iInstaller}{l}
\newcommand{\nBlockGroups}{{Blocks}}
\newcommand{\sBlockGroups}{\mathcal B}
\newcommand{\BlockGroup}{B}
\newcommand{\iBlockGroup}{m}
\newcommand{\sSegmentationTypes}{\mathcal S}
\newcommand{\sSegments}{S}

\newcommand{\sAdopters}{A}
\newcommand{\pTimeDecision}{\hat{t}}
\newcommand{\pTimeInstallation}{\tilde{t}}
\newcommand{\pPricequote}{d}
\newcommand{\pInstalledCost}{c}
\newcommand{\pRebate}{r}
\newcommand{\pEstYrkWhSave}{ywhs}
\newcommand{\pRebatekW}{rkw}
\newcommand{\pRebateLevel}{rlevel}
\newcommand{\pCellArea}{C_A}
 
\newcommand{\pBuldingArea}{bArea} 
\newcommand{\pResidentialOrCommercial}{ifRes} 
\newcommand{\pIfCanHaveASolarPanel}{ifCanSP} 

\newcommand{\pHouseholdsBlockGroupSegment}{h} 	

\newcommand{\pRetailElectricityPrices}{e} 	

\newcommand{\nAdopters}{N}
\newcommand{\Rate}{R}
\newcommand{\intensity}{\lambda}
\newcommand{\spatialOffset}{\mu}
\newcommand{\temporalOffset}{\nu}
\newcommand{\densityInstalations}{b}
\newcommand{\gausianProcess}{u}
\newcommand{\gausianProcessCommon}{v}
\newcommand{\influenceBetweenSegments}{\gamma}
\newcommand{\influenceSegment}{\beta}

\newcommand{\fSpatialOffset}{\mu}
\newcommand{\fCombine}{\nu}
\newcommand{\fTime}{g}
\newcommand{\fSpatial}{f}

\newcommand{\pSpatialOffset}{\beta}
\newcommand{\pPowerTime}{q}
\newcommand{\pScaleTime}{\alpha}
\newcommand{\pSegmentEffect}{\gamma}
\newcommand{\pScaleSpatial}{\xi}

\newcommand{\stateHouseCharacteristics}{y}
\newcommand{\stateHomePremium}{HP}
\newcommand{\probabilityOfView}{p}
\newcommand{\vRan}{\xi}
\newcommand{\stateSpace}{\mathbb{S}}

\newcommand{\pdfTS}{f}
\newcommand{\pdfLU}{h}
\newcommand{\fUtility}{U}
\newcommand{\fUtilityObs}{u}
\newcommand{\fUtilitySubjective}{\omega}

\newcommand{\fPolicy}{\alpha}





\RUNTITLE{Accelerating Residential Solar Adoption}

\TITLE{Accelerating the Adoption of Residential Solar Power Systems: Policy Analysis using a Dynamic Structural Model}

\ARTICLEAUTHORS{%
\AUTHOR{Sebastian Souyris}
\AFF{Lally School of Business, Rensselaer Polytechnic Institute, NY 12180, 
\EMAIL{souyrs@rpi.edu}}	
\AUTHOR{Jason Duan, Anant Balakrishnan}
\AFF{McCombs School of Business, The University of Texas at Austin, Austin, TX 78712, \EMAIL{\{jason.duan@mccombs.utexas.edu, anantb@mail.utexas.edu\}}} 
\AUTHOR{Varun Rai} 
\AFF{LBJ School of Public Affairs, The University of Texas at Austin, Austin, TX 78712, \EMAIL{varun.rai@mail.utexas.edu}}
} 

\ABSTRACT{}

\KEYWORDS{renewable energy, diffusion, dynamic discrete choice model, Bayesian statistics} \HISTORY{.}


\ECSwitch

\noindent\textbf{Accelerating the Adoption of Residential Solar Power Systems: Policy Analysis using a Dynamic Structural Model}\\
\noindent\textbf{Electronic Companion}\\ 

\raggedbottom

\setcounter{equation}{0}
\renewcommand\theequation{\thesection.\arabic{equation}} 

\begin{APPENDICES}
%
%
%
%
\section{Model Identification and Rebate Exogeneity}\label{sec:ECidentification}

\subsection{Identification}

The observed data include each household's choice $a_{it}$ in every
period, the economic and geographic segment of household $i$, the
state variables $c_{t}$ and $h_{it}$, and the net present value
$NPV_{it}(c_{t})$. The parameters to be identified are $\theta_{i}=\left\{ \rho,\alpha_{e_{i}},\gamma_{g_{i}},\xi_{i}\right\} $.
Here, we discuss in detail how the parameters in the utility $\theta_{i}=\left\{ \rho,\alpha_{e_{i}},\gamma_{g_{i}},\xi_{i}\right\} $
and the variance $\sigma_{\xi}^{2}$ of the household-specific
unobserved heterogeneity term can be identified from the data.

The state transition functions $p\left(c_{t+1}|c_{t}\right),\text{ }p\left(h_{i,t+1}|h_{it}\right)$
are identified from the observed costs and photovoltaic (PV) installations over time.
For each economic and geographic segment, we observed a large number
(thousands) of households. Hence, for each segment, $p\left(a_{it}|c_{t},h_{it}\right),a_{it}=0,1$
is identifiable semiparametrically (e.g., by a logistic regression
of $a_{it}$ on the polynomials of $c_{t}$ and $h_{it}$). $p\left(a_{it}|c_{t},h_{it}\right)$
is assumed to be the outcome of the parametric structural model, i.e., $p\left(a_{it}|c_{t},h_{it},\theta_{i}\right)=p\left(a_{it}|c_{t},h_{it}\right)$.
Similar to \citet{Hotz1993}, we have
\begin{eqnarray}
	\log p\left(a_{it}=1|c_{t},h_{it},\theta_{i}\right)-\log p\left(a_{it}=0|c_{t},h_{it},\theta_{i}\right) & = & \bar{u}_{i1t}(c_{t},h_{it},\theta_{i})-\beta EV_{i}\left(c_{t},h_{it},a_{it}=0,\theta_{i}\right)\nonumber \\
	\text{where \ensuremath{\bar{u}_{i1t}}(\ensuremath{c_{t}},\ensuremath{h_{it}},\ensuremath{\theta_{i}})} & = & \rho+\alpha_{e_{i}}NPV_{it}(c_{t})+\gamma_{g_{i}}h_{it}+\xi_{i}.\label{eq:A4}
\end{eqnarray}

Because the discount factor $\beta$ in a dynamic discrete choice
model is generally not separately identifiable from the structural
parameters \citep{Magnac2002}, we fix its value. Next, we discuss
how $\theta_{i}=\left\{ \rho,\alpha_{e_{i}},\gamma_{g_{i}},\xi_{i}\right\} $
and $\sigma_{\xi}^{2}$ are identified. In general, by Corollary~3
in \citet{Magnac2002}, the dynamic discrete choice model parameters
are uniquely identified given fixed $\beta$ and the state transition
probabilities (densities), $p\left(c_{t+1}|c_{t}\right),\text{ }p\left(h_{i,t+1}|h_{it}\right),\text{ }p\left(\varepsilon_{iat}\right)$.
We then focus on how the specific parameters in our model are identified.
From equation \eqref{eq:A4}, if $EV_{i}\left(c_{t},h_{it},a_{it}=0,\theta_{i}\right)$
is known, then, for each segment, $\rho,\alpha_{e_{i}},\gamma_{g_{i}}$ are
identified by a linear regression of the log-choice-probability difference
in equation \eqref{eq:A4} adjusted for expected value function, given
that $p\left(a_{it}|c_{t},h_{it},\theta_{i}\right)=p\left(a_{it}|c_{t},h_{it}\right)$
has been consistently estimated. The household-specific $\xi_{i}$
is estimated by the residuals of the regression and $\sigma_{\xi}^{2}$
is identified by the variance of the residuals across all segments.

In practice, $EV_{i}\left(c_{t},h_{it},a_{it}=0,\theta_{i}\right)$
is unknown. Given a candidate value in the estimation iterations,
$\theta_{i}=\left\{ \rho,\alpha_{e_{i}},\gamma_{g_{i}},\xi_{i}\right\} $,
$EV_{i}\left(c_{t},h_{it},a_{it}=0,\theta_{i}\right)$ is the fixed-point
solution of the expected-value recursion in Appendix~C (equation \eqref{eq:emax_approx}). The identification and
estimation therefore follow the nested fixed-point logic of \citet{Rust1987a}. In the subsequent iteration, holding fixed $EV_{i}\left(c_{t},h_{it},a_{it}=0,\theta_{i}\right)$
from the previous step, $\theta_{i}=\left\{ \rho,\alpha_{e_{i}},\gamma_{g_{i}},\xi_{i}\right\} $
is estimated again from equation \eqref{eq:A4}. This iterative procedure
converges to the fixed point of the model \citep{Rust1987a,Magnac2002}, yielding the final estimates of $\theta_{i}=\left\{ \rho,\alpha_{e_{i}},\gamma_{g_{i}},\xi_{i}\right\} $.

\subsection{Institutional Background and Exogeneity Evidence}\label{sec:ECinstitutional}
The net cost that drives adoption in our model is determined in part
by the residential PV rebate, and the institutional record
shows that Austin Energy set this rebate through an administrative
budget process rather than in response to individual adoption decisions.
Austin Energy funded the rebate from a fixed annual budget approved
in the City of Austin budget cycle and disbursed it as a posted dollars-per-watt
rate that applied uniformly to every qualifying applicant. That is,
it did not negotiate the subsidy household by household or condition
it on a neighborhood's adoption history. The same administratively recorded
schedule is described in prior publications \citep{Rai2015,Rai2013}.
When the installation-cost benchmarks used in the budgeting process
changed, Austin Energy revised the posted rate administratively and
announced the new rate publicly for the next fiscal year. The other
components of the net cost include the federal investment tax credit
by statute \citep{EPAct2005,EESA2008} and the Value-of-Solar tariff,
which replaced residential net metering on October~1, 2012, through
a separate citywide proceeding \citep{AustinEnergyVOS2017,DSIRE5669}.
The observed posted rebate rates in our data display a small number
of discrete downward revisions over the time window. The program started
in 2004 at \$5.00 per watt and then held near \$4.50 through fiscal
years 2005--2008 \citep{SolarAustin2011}. At the beginning of fiscal
2009 the rate fell to \$3.75. It was then cut to \$2.00, effective
immediately, on June~11, 2012, and to about \$1.50 in mid-2013.

Two features of this process support treating the net cost as exogenous
to the individual adoption decision. First, the rate changes were
infrequent, discrete, and downward-only, and each was triggered by
a documented decline in installation cost (a 17\% decline in 2010
and an 11\% decline over the first half of 2011), rather than by any
household's characteristics or by local adoption \citep{SolarAustin2011}.
Second, the rate was posted and applied uniformly, so an individual
applicant could neither bargain over the subsidy nor reliably time
it.

One remaining concern is that Austin Energy may set the annual rebate
budget and rate in anticipation of aggregate demand shocks in the
following year. To further assess whether the rebate rate could be
endogenous and bias our estimates, we conduct an empirical test of
whether household net present value $\npv_{it}$ has different effects
on PV adoption depending on whether we include period-specific temporal
fixed effects that control for unobserved demand shocks over time
\citep[Chapter~10]{Wooldridge2010}. In our model, $\npv_{it}$ is the
only variable that is a function of the rebate rate. Because the dependent
variable of installation is a quarterly dummy variable over time, we
use the following reduced discrete hazard model
\[
G\left(\lambda_{it}\right)=\xi_{z}+\eta_{t}+\beta_{1}\npv_{it}+\beta_{2}H_{it}+\beta_{3}X_{i},
\]
where the hazard rate $\lambda_{it}$ of installation is modeled by
the complementary log-log ($\log[-\log[1-\lambda_{it}]]$) function.
Here, $\eta_{t}$ is the fixed effect that captures the unobserved
aggregate temporal demand shocks for all households in the city, $\xi_{z}$
is the zip-code fixed effect, $X_{i}$ represents household characteristics,
and $H_{it}$ is the number of neighborhood installations within the
one-mile radial ring. If the rebate rate is endogenous because of
the anticipation of aggregate demand shocks, we should expect $\npv_{it}$
to be correlated with $\eta_{t}$. Therefore, the model that omits $\eta_{t}$,
\[
G\left(\lambda_{it}\right)=\xi_{z}+\beta_{1}\npv_{it}+\beta_{2}H_{it}+\beta_{3}X_{i},
\]
should produce a materially different estimate of the coefficient $\beta_{1}$.
However, our estimation of these two models yields estimates of $\beta_{1}$
that differ by less than one standard error (Table~\ref{tab:ECexogeneity}), which indicates that the
variation in $\npv_{it}$ identifying the price response is orthogonal to
the citywide temporal shocks captured by $\eta_{t}$.

\begin{table}[H]
	\begin{center}
		\caption{Sensitivity of the net-present-value coefficient $\beta_{1}$ to temporal fixed effects in the discrete hazard model.\label{tab:ECexogeneity}}
		{\scriptsize
			\begin{tabular}{lrrr}
				\toprule
				$\npv_{it}$ coefficient & Estimate & S.E. & p-value\bigstrut\\
				\midrule
				Model with temporal fixed effects & $2.36\times10^{-6}$ & $0.085\times10^{-6}$ & $\simeq0$\\
				Model without temporal fixed effects & $2.31\times10^{-6}$ & $0.072\times10^{-6}$ & $\simeq0$\\
				\bottomrule
			\end{tabular}
		}
	\end{center}
\end{table}
\clearpage
\section{Proof of Consistency of Using Cluster-Based Sampling}

\label{sec:Proofclusters} In Section 4.2, we aggregate households
into clusters with similar characteristics to reduce the computational
cost of the estimation. Let the total sample size of households in
a city be $M.$ These households are partitioned into $K$ clusters,
each cluster is denoted as the set $M_{k}$, whose size is $m_{k}$.
Observe that, if the data of a household used in the likelihood-based
estimation is not from any cluster, it can be viewed as in a cluster
of size one by itself. Therefore, all the households can be viewed
as in a cluster $M_{k},k=1,...,K$. We denote household $i$'s choice
as $a_{it}\in\{0,1\}$. Notice that the cluster size $m_{k(i)},i\in M_{k}$
depends on the choice $a_{it}$. If $a_{it}=1$ for some $t=1,...,T$,
then the size $m_{k(i)}=1$. If $a_{it}=0$ for all $t=1,...,T$,
then the cluster size $m_{k(i)}$ can be greater than one. We fix the cluster composition throughout the observation window, so the cluster sizes do not depend on time $t$. 

Each household has a set of characteristic variables $X_{it}$ including
economic segment, geographic segment and NPV. We consider $X_{it}$'s
as random variables independent across households, i.e., $X_{it}=\left[X_{i1},...,X_{iT}\right]\overset{iid}{\sim}p\left(x_{1},...,x_{T}\right)$,
which is the distribution of household characteristics in the population.
The function $p\left(x_{1},...,x_{T}\right)$ can represent a pdf
if $x_{t}$ is continuous or a pmf if $x_{t}$ is discrete. (The corresponding
$\intop p\left(x_{1},...,x_{T}\right)dx_{1}\cdots dx_{T}$ is integrated
with respect to either a Lebesgue measure or a counting measure).
The choice probability defined by Equation (13) of the paper can be simplified
as $p_{iat}=p\left(a|x_{it};\vartheta\right),a=0,1,$ where $\vartheta$
represents the set of model parameters that we are interested in estimating.

The full log-likelihood function if we use the entire sample in the
estimation is 
\begin{equation}
	L^{a}=\sum_{i=1}^{M}\left\{ \sum_{t=1}^{T}\log p\left(a_{it}|x_{it};\vartheta\right)\right\}. 
\end{equation}

If we employ the cluster-based sampling method above and randomly
choose one household from each cluster, the log-likelihood based on
our cluster-based sampling method is 
\begin{equation}
	L^{c}=\sum_{k=1}^{K}\left\{ m_{k(i)}\sum_{t=1}^{T}\log p\left(a_{it}|x_{it};\vartheta\right)\right\} ,\label{eq:cluster-loglkhd}
\end{equation}
where $k\left(i\right)$ indicates the cluster $k$ from which household
$i$ is selected at random ($i\sim \text{Unif}(M_k)$).

\begin{proposition}
	Assume the probabilistic choice model described in Section 4.1 and
	the cluster-based sampling method described above. Let $L^{a}$ and
	$L^{c}$ designate the full and the weighted cluster-based sampling
	log-likelihood functions. Assume further that the probabilistic choice $p\left(a_{it}|x_{it};\vartheta\right)$,
	the structure of the attribute space, and the attribute distribution
	$p\left(x_{1},...,x_{T}\right)$ imply a unique maximum $\vartheta^{*}$ when
	the sample size $M\rightarrow\infty$. Then the maximum likelihood
	estimators of $L^{a}$ and $L^{c}$ converge to the same $\vartheta^{*}$
	almost surely when the sample size $M\rightarrow\infty$ and the number
	of clusters $K\rightarrow\infty$.
\end{proposition}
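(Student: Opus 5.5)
We will show that the normalized objectives $L^{a}/M$ and $L^{c}/M$ converge almost surely, uniformly in $\vartheta$, to the same population criterion
\[
\ell(\vartheta)=\intop\sum_{a_{1},\dots,a_{T}}p_{0}\left(a_{1},\dots,a_{T}|x_{1},\dots,x_{T}\right)\sum_{t=1}^{T}\log p\left(a_{t}|x_{t};\vartheta\right)\,p\left(x_{1},\dots,x_{T}\right)dx_{1}\cdots dx_{T},
\]
where $p_{0}$ is the true choice law. Since the assumption of the proposition gives a unique maximizer $\vartheta^{*}$ of this limit, the standard argmax (Wald-type) consistency argument then shows that both maximizers converge a.s. to $\vartheta^{*}$. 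For $L^{a}$ this is routine. The summands $\ell_{i}(\vartheta)=\sum_{t}\log p(a_{it}|x_{it};\vartheta)$ are i.i.d. across households, because the $X_{i}$ are i.i.d. and the choices are generated household by household. The strong law of large numbers, combined with a uniform law of large numbers, gives $\sup_{\vartheta}|L^{a}/M-\ell(\vartheta)|\to0$ a.s. The uniform law needs compactness of the parameter space and continuity and an integrable envelope of $\log p$; these follow from the logit form of Equation (13) with bounded $\mathrm{NPV}$ and installed base.

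The core step is to compare $L^{c}$ with $L^{a}$. We write $L^{c}=\sum_{k}m_{k}\ell_{i_{k}}(\vartheta)$, where $i_{k}\sim\mathrm{Unif}(M_{k})$ independently across clusters, conditional on the data. The key observation is unbiasedness: $E[m_{k}\ell_{i_{k}}\mid\text{data}]=\sum_{i\in M_{k}}\ell_{i}$, so $E[L^{c}\mid\text{data}]=L^{a}$. Clusters containing an adopter are singletons, so they contribute exactly. For non-adopter clusters every member has $a_{it}=0$ for all $t$, and households are grouped with similar characteristics. Hence $\ell_{i}(\vartheta)=\sum_{t}\log p(0|x_{it};\vartheta)$, which differs within a cluster only through $x_{it}$. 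We then define $D_{k}=m_{k}\ell_{i_{k}}-\sum_{i\in M_{k}}\ell_{i}$. These are conditionally independent and mean zero, with $|D_{k}|\le m_{k}\,\omega_{k}$, where $\omega_{k}$ is the within-cluster oscillation of $\ell_{i}(\vartheta)$. We aim to show $\frac{1}{M}\sum_{k}D_{k}\to0$ a.s. A fourth-moment or Hoeffding-type bound, $P(|\sum_{k}D_{k}|>M\epsilon)\le2\exp\!\left(-2M^{2}\epsilon^{2}/\sum_{k}m_{k}^{2}\omega_{k}^{2}\right)$, gives this once $\sum_{k}m_{k}^{2}\omega_{k}^{2}=o(M^{2}/\log M)$. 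This condition holds, for instance, if $\max_{k}m_{k}/M\to0$ as $K\to\infty$ (and $\omega_{k}$ is bounded). Borel--Cantelli then yields the almost sure statement.

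Uniformity in $\vartheta$ is obtained by a finite $\delta$-net of the compact parameter space plus equicontinuity of $\log p(0|x;\vartheta)$ in $\vartheta$, which holds uniformly over bounded $x$. Combining the two steps, $\sup_{\vartheta}|L^{c}/M-\ell(\vartheta)|\to0$ a.s., and the argmax theorem finishes the proof for $L^{c}$.

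\textbf{Main obstacle.} The hard part is controlling the cluster-sampling fluctuation $\sum_{k}D_{k}$ almost surely. This requires explicitly pinning down how the clustering scales as $M,K\to\infty$: either cluster sizes that are not too large ($\max_{k}m_{k}=o(M/\sqrt{\log M})$), or within-cluster homogeneity ($\omega_{k}\to0$ as clusters refine). I would first try to use the fact that clusters are defined by similar characteristics, so that $\omega_{k}$ is bounded by a Lipschitz constant times the cluster diameter, which shrinks as $K\to\infty$. The Hoeffding bound then holds even with large clusters. Failing that, I would fall back on the size condition $\max_{k}m_{k}/M\to0$.
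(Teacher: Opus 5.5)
Your proof is correct, up to a small slip in the rate conditions noted below, but it takes a different route from the paper.

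\textbf{The paper's route.} The paper works in a superpopulation framework modeled on choice-based sampling. It gives each cluster's sampled representative the law $\frac{1}{m_k}\,p(x_1,\dots,x_T\mid m_k)\,p(m_k\mid a_1,\dots,a_T)\,p(a_1,\dots,a_T)$, with $1/m_k$ the selection probability. It then applies the strong law directly to $L^c/K$, as if the $K$ representatives were i.i.d., and lets the weight $m_k$ cancel the $1/m_k$ inside the integral. This makes the limit $D^c(\vartheta)$ equal to the full-sample limit $D^a(\vartheta)$. Lemma~1 of Manski et al.\ (1977), with Rao's information inequality, gives a common unique maximizer, and their Lemma~2 gives almost-sure convergence of both estimators.

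\textbf{Your route.} You argue design-based, conditionally on the realized data. Since $m_k\ell_{i_k}$ is a Horvitz--Thompson estimate of the cluster total, $L^c-L^a$ is a sum of conditionally independent, bounded, mean-zero terms. Hoeffding, Borel--Cantelli and a $\delta$-net then give $\sup_\vartheta|L^c-L^a|/M\to0$ a.s., which reduces the claim to consistency of the full-sample MLE.

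\textbf{What each buys.} The paper's argument is shorter and mirrors the choice-based-sampling literature. However, it must posit a law for the representatives even though the partition is built jointly from the data by $k$-means. Also, the displayed law integrates to $E[1/m_{k(i)}]$, not to one, so its identity $D^c=D^a$ holds only up to the positive factor $\lim M/K$; this is harmless for the argmax. Your argument needs no model of cluster sizes and is valid for any data-dependent partition, because it conditions on it. It normalizes both objectives by $M$, and it makes explicit the regularity conditions (compactness, bounded log-likelihood contributions, cluster-size growth) that the paper leaves inside its appeal to Lemma~2. The price is more machinery.

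\textbf{The slip.} With $\omega_k\le B$ you get $\sum_k m_k^2\omega_k^2\le B^2M\max_k m_k$, so your Borel--Cantelli step needs $\max_k m_k=o(M/\log M)$. The condition $\max_k m_k/M\to0$ alone yields only convergence in probability. Likewise, $\max_k m_k=o(M/\sqrt{\log M})$ does not imply your summability condition in general. This does not matter here, because the paper assumes cluster sizes do not grow with $M$. Then $\sum_k m_k^2\omega_k^2=O(M)$, the Hoeffding tail is $\exp(-cM)$, and the within-cluster homogeneity route ($\omega_k\to0$) is unnecessary.
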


\proof{Proof}
When the cluster-based sampling method is applied, the clusters sizes
form a distribution $p\left(m_{k(i)}|a_{i1},...,a_{iT}\right)$. The
conditional distribution of household characteristics given clustering
is $p\left(x_{i1},...,x_{iT}|m_{k(i)},a_{i1},...,a_{iT}\right)$.
Notice that the random variables in the full log-likelihood $L^{a}$
and $L^{c}$ have different distributions. In the full log-likelihood
$L^{a}$, $a_{it}$ and $x_{it}$ have simply the joint distribution
\begin{equation}
	p_{i}\left(a_{i1},...,a_{iT},x_{it},...,x_{iT}\right)=\left[\prod_{t=1}^{T}p\left(a_{it}|x_{it};\vartheta\right)\right]p\left(x_{it},...,x_{iT}\right),
\end{equation}
whereas in $L^{c}$, the random variables $a_{it}$, $m_{k(i)}$ and
$x_{it}$ of household $i$ have a different joint distribution because
of cluster-based sampling. We have 
\begin{eqnarray}
	&  & p\left(a_{i1},...,a_{iT},x_{it},...,x_{i(k)T},m_{k}\right)\nonumber \\
	&  & =\frac{1}{m_{k(i)}}p\left(x_{i1},...,x_{iT}|m_{k(i)}\right)p\left(m_{k(i)}|a_{i1},...,a_{iT}\right)p\left(a_{i1},...,a_{iT}\right),
\end{eqnarray}
where 
\[
p\left(a_{i1},...,a_{iT}\right)=\int p\left(a_{i1},...,a_{iT},x_{it},...,x_{iT}\right)dx_{i1}\cdots dx_{iT},
\]
and $1/m_{k(i)}$ reflects the probability that $i$ is randomly chosen
from cluster $M_{k}$ ($m_{k(i)}=1$ when $i$ is in a cluster of
size one, which implies it will always be chosen if it is not in a
cluster with more than one household).

Let us consider the log-likelihood $L^{a}$ divided by $M$: 
\[
D_{M}^{a}\left(a,x,\vartheta\right)=\frac{1}{M}\sum_{i=1}^{M}\left\{ \sum_{t=1}^{T}\log p\left(a_{it}|x_{it};\vartheta\right)\right\} .
\]
When $M\rightarrow\infty$, by the strong law of large numbers\footnote{Here, $\int_{x_{t}}dx_{1}\cdots dx_{T}$ is an abbreviation of $\int_{x_{1}}\cdots\int_{x_{T}}dx_{1}\cdots dx_{T}$
	and $\underset{a_{t}=\left\{ 0,1\right\} }{\sum}$ is for $\underset{a_{1}=\left\{ 0,1\right\} }{\sum}\underset{a_{2}=\left\{ 0,1\right\} }{\sum}\cdots\underset{a_{T}=\left\{ 0,1\right\} }{\sum}$.}, 
\begin{eqnarray*}
	&  & D_{M}^{a}\left(a,x,\vartheta\right)\overset{a.s.}{\rightarrow}D^{a}\left(\vartheta\right)\\
	&  & =\int_{x_{t}}\sum_{a_{t}=\left\{ 0,1\right\} }\left\{ \sum_{t=1}^{T}\log p\left(a_{t}|x_{t};\vartheta\right)\right\} \prod_{t=1}^{T}p\left(a_{t}|x_{t};\vartheta\right)p\left(x_{1},...,x_{T}\right)dx_{1}\cdots dx_{T}.
\end{eqnarray*}

Now, let us consider the log-likelihood $L^{c}$ divided by $K$:
\[
D_{K}^{c}\left(a,x,\vartheta\right)=\frac{L^{c}}{K}=\frac{1}{K}\sum_{k=1}^{K}\left\{ m_{k(i)}\sum_{t=1}^{T}\log p\left(a_{it}|x_{it};\vartheta\right)\right\} .
\]
When $M\rightarrow\infty$, because the cluster sizes $m_{k}$ do not
increase with $M$, the number of clusters $K\rightarrow\infty$.
Hence, by the strong law of large numbers, 
\begin{eqnarray*}
	&  & D_{K}^{c}\left(a,x,\vartheta\right)\overset{a.s.}{\rightarrow}D^{c}\left(\vartheta\right)=\int_{x_{t}}\int_{m_{k}}\sum_{a_{t}=\left\{ 0,1\right\} }\left\{ m_{k}\sum_{t=1}^{T}\log p\left(a_{t}|x_{t};\vartheta\right)\right\} \cdot\\
	&  & \frac{1}{m_{k}}p\left(x_{1},...,x_{T}|m_{k}\right)p\left(m_{k}|a_{1},...,a_{T}\right)p\left(a_{1},...,a_{T}\right)dm_{k}dx_{1}\cdots dx_{T}\\
	&  & =\int_{x_{t}}\int_{m_{k}}\sum_{a_{t}=\left\{ 0,1\right\} }\left\{ \sum_{t=1}^{T}\log p\left(a_{t}|x_{t};\vartheta\right)\right\} p\left(x_{1},...,x_{T},m_{k},a_{1},...,a_{T}\right)dm_{k}dx_{1}\cdots dx_{T}\\
	&  & =\int_{x_{t}}\sum_{a_{t}=\left\{ 0,1\right\} }\left\{ \sum_{t=1}^{T}\log p\left(a_{t}|x_{t};\vartheta\right)\right\} p\left(x_{1},...,x_{T},a_{1},...,a_{T}\right)dx_{1}\cdots dx_{T}\\
	&  & =\int_{x_{t}}\sum_{a_{t}=\left\{ 0,1\right\} }\left\{ \sum_{t=1}^{T}\log p\left(a_{t}|x_{t};\vartheta\right)\right\} \prod_{t=1}^{T}p\left(a_{t}|x_{t};\vartheta\right)p\left(x_{1},...,x_{T}\right)dx_{1}\cdots dx_{T}.
\end{eqnarray*}
Therefore, when $M\rightarrow\infty$, $D^{c}\left(\vartheta\right)=D^{a}\left(\vartheta\right)$,
so $D_{K}^{c}\left(a,x,\vartheta\right)$ and $D_{M}^{a}\left(a,x,\vartheta\right)$
are asymptotically equivalent. Observe that 
\begin{eqnarray*}
	\int_{x_{t}}\sum_{a_{t}=\left\{ 0,1\right\} }\prod_{t=1}^{T}p\left(a_{t}|x_{t};\vartheta\right)p\left(x_{1},...,x_{T}\right)dx_{1}\cdots dx_{T} & = & 1\text{ and}\\
	\sum_{t=1}^{T}\log p\left(a_{t}|x_{t};\vartheta\right) & = & \log\prod_{t=1}^{T}p\left(a_{t}|x_{t};\vartheta\right)
\end{eqnarray*}
Hence, by Lemma 1 in \cite{Manskietal1977} and \cite{Rao1973}

$D^{c}\left(\vartheta\right)$ and $D^{a}\left(\vartheta\right)$
are maximized at the same and unique $\vartheta=\vartheta^{*}.$ And
by Lemma 2 of \cite{Manskietal1977}, both $\hat{\vartheta}_{M}$,
which maximizes $D_{M}^{a}\left(a,x,\vartheta\right)$, and $\hat{\vartheta}_{K}$,
which maximizes $D_{K}^{c}\left(a,x,\vartheta\right)$, converge to
the same $\vartheta^{*}$ almost surely.{\hfill \ensuremath{\Box}}\endproof
~\\
\noindent\textbf{Application to the dynamic structural model estimation}

We aggregate non-adopting households into clusters with similar characteristics within each census block group (www2.census.gov) using the k-means clustering technique, with home market value, house size, tree cover, and received irradiance as the clustering features.

Figure \ref{fig:aggbkgexample} shows the households before and after the clustering. Figures \ref{fig:aggbkgexample}(a) and (b) illustrate the Austin Energy service area, where the red dots represent adopters during the analysis period (2004-2013) and each black dot is a potential adopter in Figure \ref{fig:aggbkgexample}(a) and a cluster of potential adopters in Figure \ref{fig:aggbkgexample}(b). 

Figure \ref{fig:aggbkgexample}(c) shows magnified plots of two clusters within a particular census block group, where the black dots represent potential adopters and the two green dots mark the centroids of these two clusters. The clusters remain constant over the time frame of our observations. 

The clustering yields 1,185 clusters of size greater than one, with an average cluster size of 26 and a median of 7. Given this large number of clusters, the asymptotic result above applies, and the clustering does not introduce appreciable bias into our estimation.
 
\begin{figure}[H]
{	\centering
	\begin{tabular}{C{5cm} C{5cm} C{7cm}}
		{\footnotesize(a) Non-adopters disaggregated}&{\footnotesize(b) Non-adopters in clusters} & {\footnotesize(c) Non-adopters disaggregated and cluster centroids}\\
		\includegraphics[width=1\linewidth]{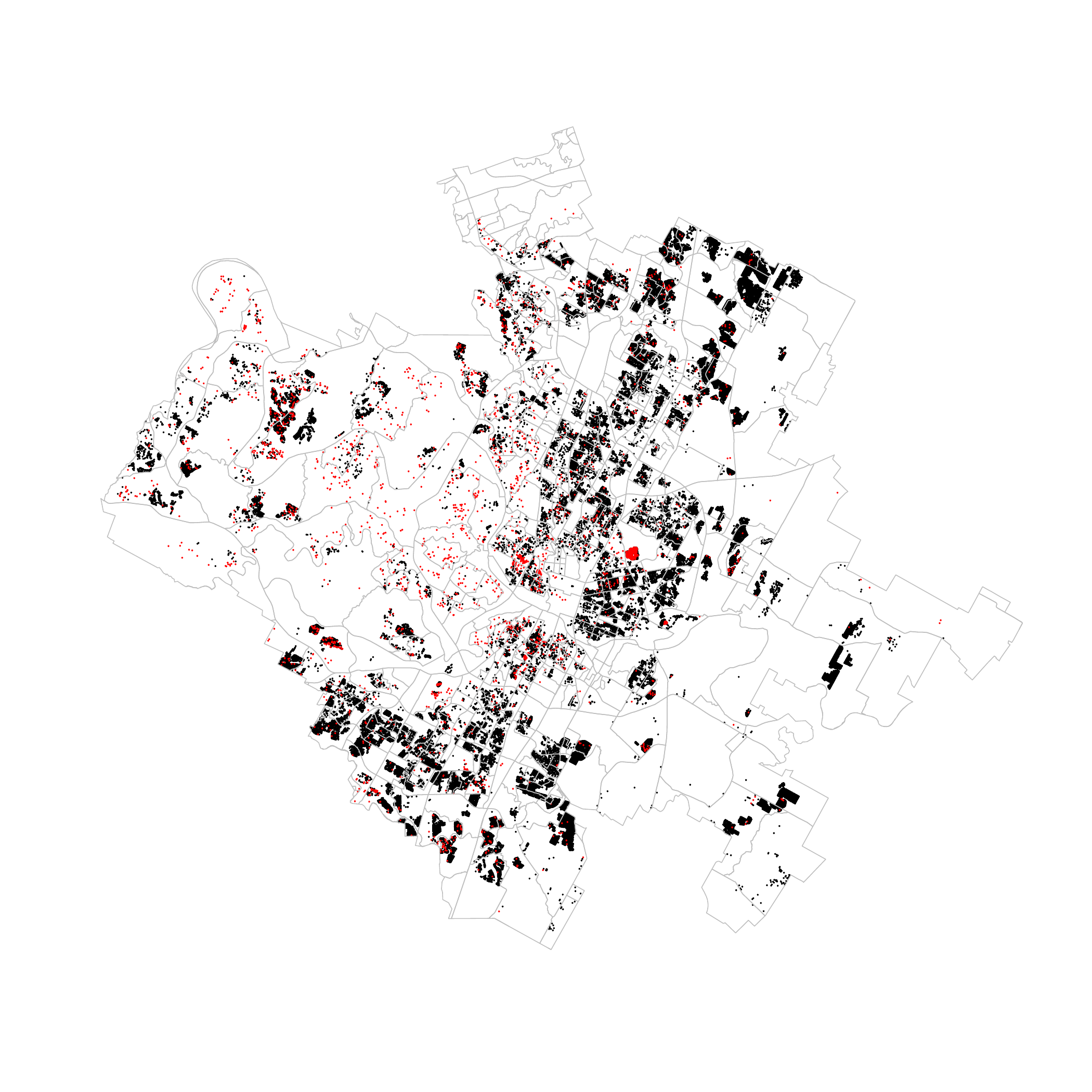}	&  
		\includegraphics[width=1\linewidth]{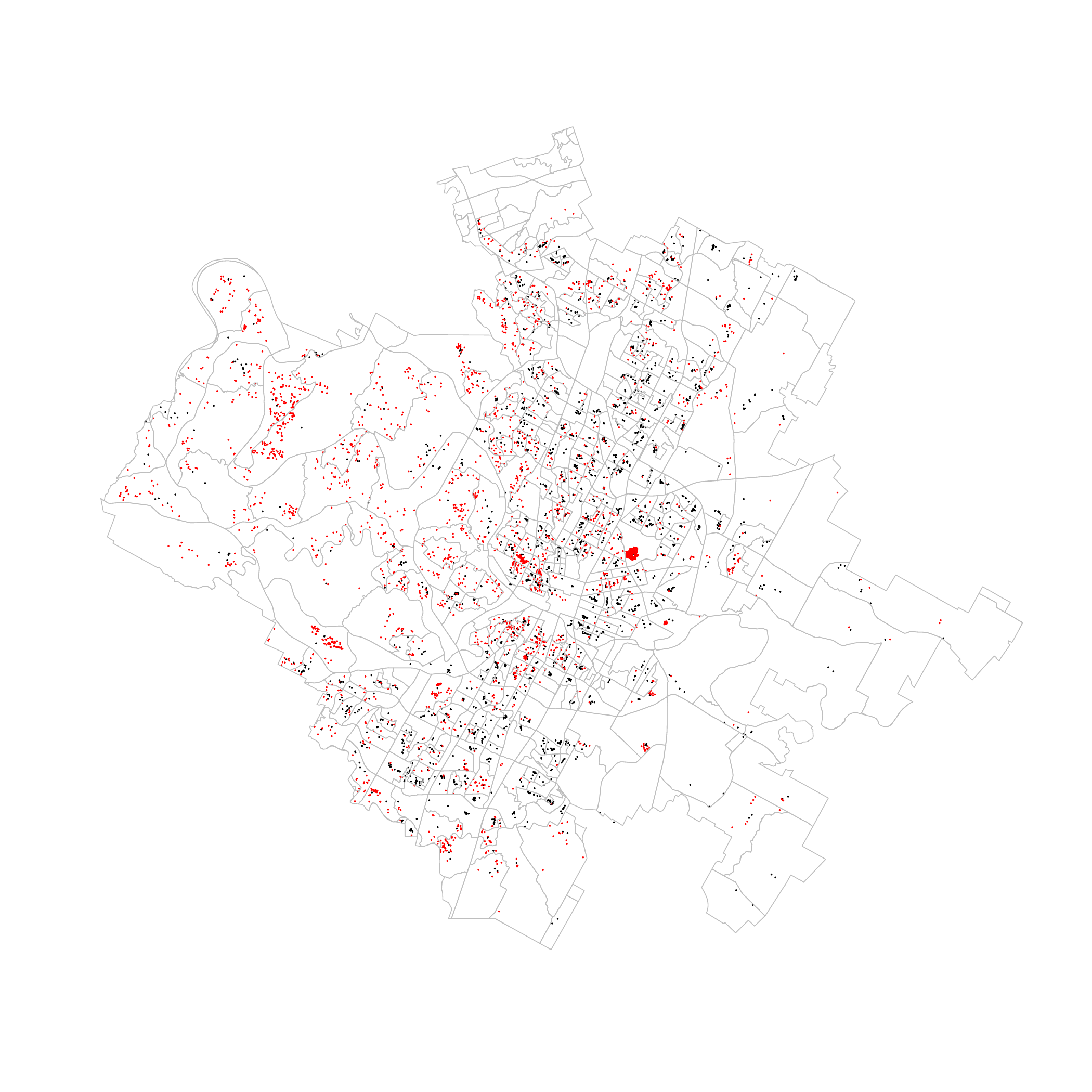} &  
		\includegraphics[width=1\linewidth]{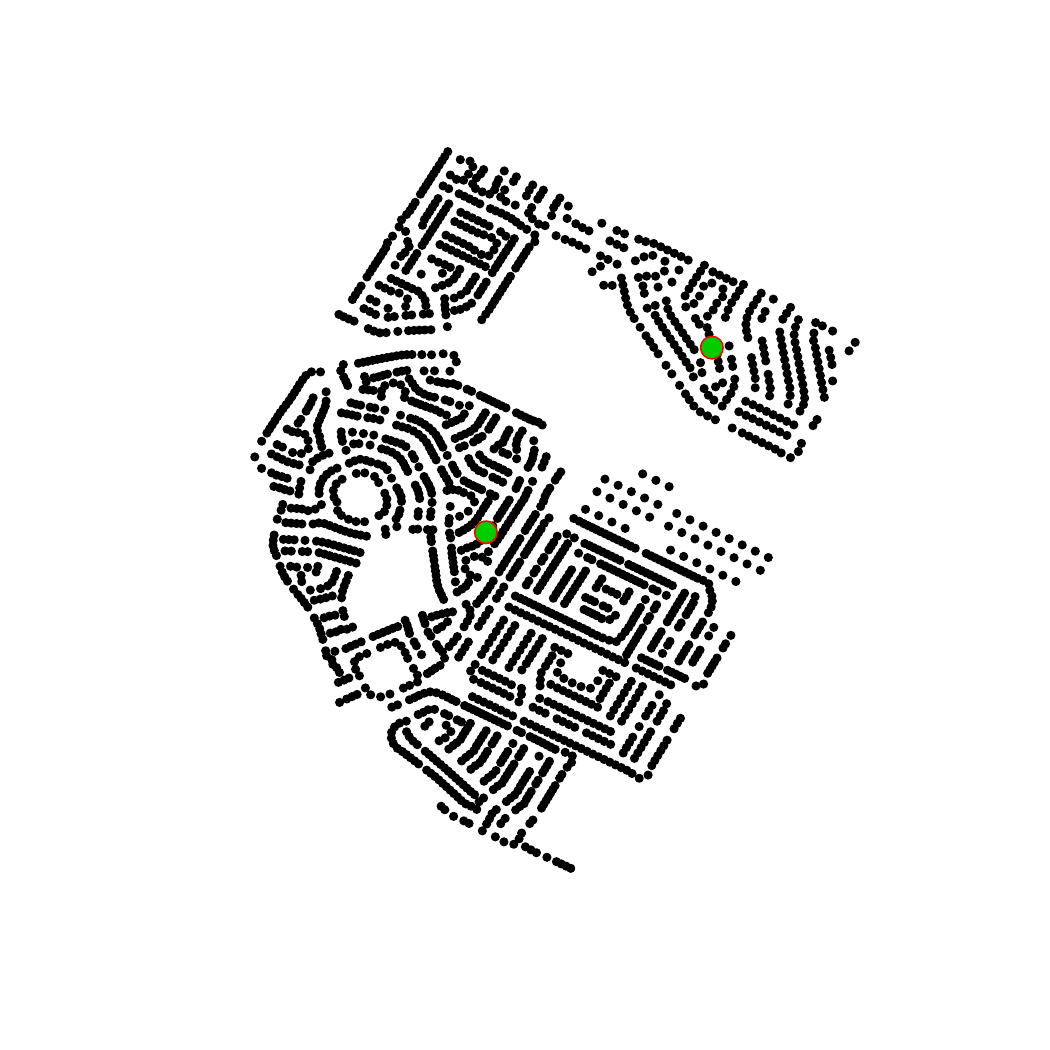}\\
	\end{tabular}
	\caption{PV current adopters and potential future adopters in 2013, Austin, TX.}
	\label{fig:aggbkgexample}
}

{\footnotesize 
	\setlength{\baselineskip}{9pt}
	\noindent \emph{Note.} Red dots are adopters. Black dots are non-adopters. Green dots mark the cluster centroids.
	\par}

\end{figure}



\clearpage
\section{Estimation and Simulation Algorithms}
We reproduce in this appendix the equations presented in the paper so that the algorithm's equation references are self-contained. See the paper for the detailed explanations.
\vspace{0.5cm}

\begin{minipage}{\textwidth}
\subsubsection*{Dynamic discrete choice model notation}
	\begin{center}
		\vskip 10pt

\newcommand{\MyColanWidth}{1.3}
\newcommand{\MyColbnWidth}{4.5}

{\footnotesize
\begin{tabular} {p{\MyColanWidth in}  p{\MyColbnWidth in}}
\toprule
\multicolumn{2}{l}{\textbf{Indices and Sets}}\bigstrut\\
\midrule
$i \in M$ & Set of potential households adopters.\\

$e \in E$ & Set of economic segments.\\

$g\in G$ & Set of geographic segments.\\

$t \in \{1,\ldots, T\}$ &  Periods (quarters). \\

$d_l \in D$ & Set of discrete distances to define neighborhoods by distance radius rings. $D = \{d_1<\ldots< d_{|D|}\}$ and $d_0 = 0$.\\

\midrule
\multicolumn{2}{l}{\textbf{Input Data}}\bigstrut\\
\midrule



$\pvSize_{i}$ & PV system size of household $i$.\\ 

$\pvGen_{it}$ & Expected electricity generated per watt by household $i$'s PV system in period $t$.\\ 

$\vos_t$ &  Bill credits (i.e., solar energy rate) for the electricity produced by PV in period $t$.\\ 

$\pvPrice_t$ & PV installation cost in period $t$.\\

$\rebate_{t}$ & PV rebate available in period $t$.\\

$\text{ITC}_{t}$ & Federal Investment tax credit in period $t$.\\


$e_i \in E$ & Economic segment of household $i$.\\

$g_i \in G$ & Geographic segment of household $i$.\\


$t_i$ & Period of household $i$'s adoption decision. The PV system is installed in the following period. And $t_i = \infty$ implies $i$ does not adopt.\\

$\tau_i$ & Period of property $i$'s built. If the property is built before the beginning of the estimation window, then $\tau_i = 1$.\\

${M}_t$ & Set of households that have adopted before period $t$. ${M}_t = \{i \in M: t_i < t\}$.\\

${M}^c_t$ & Set of households that have not adopted in period $t$. ${M}^c_t = {M} \setminus {M}_t$.\\

$\df$& Household discount factor.\\

$L$ &  Lifespan of a PV system (80 quarters).\\

\midrule
\multicolumn{2}{l}{\textbf{Decision variable}}\bigstrut\\
\midrule
$a_{it} \in \{0,1\}$& Household $i$'s adoption decision in period $t$; i.e., it equals 1 if household $i$ adopts and 0 otherwise.\\

\midrule
\multicolumn{2}{l}{\textbf{State variables}}\bigstrut\\
\midrule


$\pvCost_{t}$ & Net cost per watt of PV in period $t$. $\pvCost_{t} = (\pvPrice_t - \rebate_t)(1- \textrm{ITC}_t)$.\\

$\bm{\sInstBase}_{it} = \{\sInstBase_{i1t}, \ldots, \sInstBase_{i|D|t}\}$ & Previous installations within radial distance $d\in (d_{l-1}, d_l]$ of household $i$, $l = 1,\ldots,|D|$.\\




$\latentUtility_{iat}$ & Additive random shock in household $i$'s utility in period $t$.\\




\midrule
\multicolumn{2}{l}{\textbf{Structural parameters}}\bigstrut\\
\midrule


%
$\pBaseUtility$ & Constant intercept of the utility function.\\

$\pMMoney_{e}$ & Weight of solar economic benefit for economic segment $e \in E$.\\

$\bm{\pMPE}_{g} = \{\pMPE_{g1}, \ldots, \pMPE_{gD}\}$ &  Neighborhood effects of previous installations within radial distance $d\in (d_{l-1}, d_l]$, $l = 1,\ldots,|D|$, for geographic segment $g \in G$.\\

$\bm\theta = \{\pBaseUtility, \bm\pMMoney, \bm{\pMPE}\}$ & Vector of structural parameters.\\

$\randomEffects_{i}$ & Household unobserved heterogeneity.\\

$\sigma_\randomEffects$ & Standard deviation of household heterogeneity $\randomEffects_i$.\\

\bottomrule
\end{tabular}
}
	\end{center}
\end{minipage}	

\clearpage
\subsubsection*{Per-Period Utility for PV Adoption}

\begin{description}
\item[]Utility for adopting:

\begin{align}
	&& u_{i1t}(\pvCost_t,\bm\sInstBase_{it},\randomEffects_{i},\latentUtility_{i1t};\pBaseUtility, \pMMoney_{e_i},\bm\pMPE_{g_i})&=   \pBaseUtility + \pMMoney_{e_i} \npv_{it}(\pvCost_t) +\bm\pMPE_{g_i}'\bm\sInstBase_{it} + \randomEffects_{i}+\latentUtility_{i1t},\label{eq:ui1t}\\
\text{where}& && \nonumber\\
&	&\npv_{it}(\pvCost_t)&=   
	%
	%
	\left(\sum\limits_{\tau = t+1}^{t+1+L} \df^{\tau-t}\vos_\tau\pvGen_{i\tau} - \pvCost_t\right)\pvSize_i. \label{eq:vpnit}
\end{align}

\item[]Utility for not adopting:
\begin{align}
	u_{i0t}(\latentUtility_{i0t}) &= \latentUtility_{i0t}.\label{eq:ui0t}
\end{align}
\end{description}

\subsubsection*{Dynamic Optimal Decision of Adoption}

\begin{align}
	\fValue_{it}(\pvCost_t,\bm\sInstBase_{it},\randomEffects_{i} ,\bm{\latentUtility}_{it}; \bm\theta) & = \max\limits_{a \in \{0,1\}}\fValueCE_{iat} (\pvCost_t,\bm\sInstBase_{it}, \randomEffects_{i} ,\bm{\latentUtility}_{it}; \bm\theta),\label{eq:vit}
\end{align}

where $\fValueCE_{iat}$ is the choice specific value function for making the decision $a \in \{0,1\}$:
\begin{align}
	\fValueCE_{i0t} (\pvCost_t,\bm\sInstBase_{it}, \randomEffects_{i} ,\bm{\latentUtility}_{it}; \bm\theta)& = 
	u_{i0t}(\latentUtility_{i0t})  + \df E_{\pvCost, \bm\sInstBase,\bm{\latentUtility}}\left[\fValue_{it+1}(\pvCost, \bm\sInstBase,\randomEffects_i,\bm{\latentUtility}; \bm\theta)|\pvCost_t,\bm\sInstBase_{it},\bm{\latentUtility}_{it}\right]\nonumber\\
	&= \latentUtility_{i0t}  + \df E_{\pvCost,\bm\sInstBase ,\bm{\latentUtility}_i}\left[\fValue_{it+1}(\pvCost, \bm\sInstBase,\randomEffects ,\bm{\latentUtility}; \bm\theta)|\pvCost_t,\bm\sInstBase_{it},\bm{\latentUtility}_{it}\right].
	\label{eq:vi0t}\\
	\fValueCE_{i1t} (\pvCost_t,\bm\sInstBase_{it}, \randomEffects_{i} ,\bm{\latentUtility}_{it}; \bm\theta)& = u_{i1t}(\pvCost_t,\bm\sInstBase_{it},\randomEffects_{i},\latentUtility_{i1t} ; \bm\theta) \nonumber\\
	&= \pBaseUtility + \pMMoney_{e_i} \npv_{it}(\pvCost_t) +\bm\pMPE_{g_i}'\bm\sInstBase_{it} + \randomEffects_{i}+\latentUtility_{i1t}.\label{eq:vi1t}
\end{align}

Household $i$ will adopt at $t$ if $\fValueCE_{i1t} (\pvCost_t,\bm\sInstBase_{it}, \randomEffects_{i} ,\bm{\latentUtility}_{it}; \bm\theta) \geq \fValueCE_{i0t} (\pvCost_t,\bm\sInstBase_{it}, \randomEffects_{i} ,\bm{\latentUtility}_{it}; \bm\theta)$.

\subsubsection*{State Variables, Their Transition Equations and Other Model Specifications}

\begin{align}
	\pvCost_{t+1} &= \kappa \pvCost_{t} + \epsilon^{\pvCost}_t,  & \epsilon^{\pvCost}_t &\sim N(0,\sigma^2_{\pvCost}), \label{eq:ct}\\
	\sInstBase_{i,t+1} &= \sInstBase_{it} + \Delta_{it},  & \Delta_{it} &\sim \text{Exponential}\left(\tfrac{1}{(\lambda_i-1)\sInstBase_{it}}\right), \quad \lambda_i>1,\label{eq:ht}\\
	\randomEffects_{i} &\sim N(0,\sigma^2_{\randomEffects}),  & \sigma^2_{\randomEffects} &\sim InverseGamma(\pi_1,\pi_2). \label{eq:xii}
\end{align}

\subsection*{Bayesian Estimation}

\vskip 10pt
\begin{tabular} {p{1.3 in}  p{4.5 in}}
$r = 1,\ldots R$ & Iterations of the Metropolis–within–Gibbs method.\\

$\left\{c_t^{j(r)}\right\}_{j=1}^J$  & Random grid of size $J$ of the net cost state variable $c_t$ in iteration $r$ using \eqref{eq:ct}.\\

$f(c_{t+1}| c_t)$ & Gaussian Kernel resulting from \eqref{eq:ct}.\\

$K_\eta\left(\bm\theta^{(n)} - \bm\theta^{(r)} \right)$ & Multivariate Gaussian kernel with bandwidth $\eta > 0$.\\ 

\end{tabular}

\begin{align}
	\hat E\Big[\hat V^{(r)}_{it+1}\left(c, \bm{h}_{it+1},\xi ; \bm\theta\right)&|c^{j(r)}_{t},\bm{h}_{it},\xi^{(r)}_{i}; \bm\theta^{(r)}\Big] \nonumber\\
	 &= \frac{\sum_{n = r-N}^{r-1} K_\eta\left(\bm\theta^{(n)} - \bm\theta^{(r)} \right) \sum_{j=1}^J \hat V^{(n)}_{it+1}\left({c}^{j(n)}_{t+1},\bm{h}_{it+1},\xi^{(n)}_{i} ; \bm\theta^{(n)}\right)f\left({c}^{j(n)}_{t+1}|{c}^{j(r)}_{t}\right)}{\sum_{n = r-N}^{r-1} K_\eta\left(\bm\theta^{(n)} - \bm\theta^{(r)} \right)\sum_{j=1}^J f\left({c}^{j(n)}_{t+1}|{c}^{j(r)}_{t}\right)}, \label{eq:emax_approx} 
\end{align}	
where
\begin{align}
	\hat V^{(n)}_{it}({c}^{j(n)}_{t},\bm{h}_{it},\xi^{(n)}_{i} ; \bm\theta^{(n)}) &
	= \log\left(\exp\hat{\mathcal{V}}^{(n)}_{i0t}\left({c}^{j(n)}_{t},\bm{h}_{it},\xi^{(n)}_{i}; \bm\theta^{(n)}\right)+ \exp\hat{\mathcal{V}}^{(n)}_{i1t}\left({c}^{j(n)}_{t},\bm{h}_{it},\xi^{(n)}_{i}; \bm\theta^{(n)}\right)\right), \label{eq:pseudo_vf}
\end{align}
and
\begin{align}
	\hat{\mathcal{V}}^{(n)}_{i0t}\left({c}^{j(n)}_{t},\bm{h}_{it},\xi^{(n)}_{i} ; \bm\theta^{(n)}\right)& 
	=  \df \hat E\left[\hat V^{(n)}_{it+1}\left({c}, \bm{h}_{it+1},\xi; \bm\theta\right)|{c}^{j(n)}_{t},\bm{h}_{it},\xi^{(n)}_{i}; \bm\theta^{(n)}\right]\\
	\hat{\mathcal{V}}^{(n)}_{i1t}\left({c}^{j(n)}_{t},\bm{h}_{it},\xi^{(n)}_{i} ; \bm\theta^{(n)}\right)& 
	=\rho^{(n)} + \alpha^{(n)}_{e_i} \textrm{NPV}_{it}\left({c}^{j(n)}_t\right) +\bm\gamma^{(n)'}_{g_i}\bm{h}_{it} + \xi^{(n)}_{i}. 
\end{align}

Then, at iteration $r$, the probability of adopting at period $t$ is:
\begin{align}
	\hat p^{(r)}_{i1t}(c_t,\bm{h}_{it}, \xi^{(r)}_{i};\bm\theta^{(r)}) &= \frac{\exp\hat{\mathcal{V}}^{(r)}_{i1t} \left({c}_{t},\bm{h}_{it},\xi^{(r)}_{i}; \bm\theta^{(r)}\right)}{\exp\hat{\mathcal{V}}^{(r)}_{i1t}\left({c}_{t},\bm{h}_{it},\xi^{(r)}_{i} ; \bm\theta^{(r)}\right)  +\exp\hat{\mathcal{V}}^{(r)}_{i0t}\left({c}_{t},\bm{h}_{it},\xi^{(r)}_{i} ; \bm\theta^{(r)}\right)}.
\end{align}

Recall that $\tau_i$ is the period in which household $i$'s property is built ($\tau_i=1$ if built before or at the first period) and $t_i$ is the adoption period ($t_i=T$ if $i$ does not adopt during the horizon). Then, at iteration $r$, household $i$'s log-likelihood is:
\begin{align}
l_i^{(r)}&\left(\bm{c},\bm{h}_{i}, \xi^{(r)}_{i};\bm{\theta}^{(r)}\right) \nonumber\\
&= \sum_{t=\tau_i}^{t_i}\left\{1_{\{a_{it}=1\}}\log\left({\hat{p}}_{i1t}^{(r)}\left(c_t,\bm{h}_{it}, \xi^{(r)}_{i};\bm{\theta}^{(r)}\right)\right)+1_{\{a_{it}=0\}}\log\left(1-{\hat{p}}_{i1t}^{(r)}\left(c_t,\bm{h}_{it}, \xi^{(r)}_{i};\bm{\theta}^{(r)}\right)\right)\right\},\nonumber
\end{align}

And the full likelihood is:
\begin{align}
\mathcal{L}^{(r)}\left(\bm{c},\bm{h}, \bm\xi^{(r)};\bm{\theta}^{(r)}\right) &= \sum_{i \in M} l^{(r)}_i\left(\bm{c},\bm{h}_{i}, \xi^{(r)}_{i};\bm{\theta}^{(r)}\right).\label{eq:loglik}
\end{align}

\begin{center}
\textbf{Markov Chain Monte Carlo (MCMC) Estimation Algorithm}
{	
{\scriptsize
	\begin{algorithm}[H]
		\SetAlgoLined
		\DontPrintSemicolon
		\SetKwData{llk}{llk}
		
		\SetKwData{Up}{up}
		
		\SetKwFunction{EmaxApprox}{$\hat E$}
		\SetKwFunction{VApprox}{$\hat V$}
		\SetKwFunction{LogLik}{$\mathcal{L}$}
		\SetKwFunction{CSimulate}{SimulateCostFunction}
		\SetKwFunction{InverseGamma}{$InverseGamma$}
		\SetKwFunction{Normal}{$N$}
		\SetKwFunction{U}{U}
		\SetKwFunction{mean}{mean}
		\SetKwInOut{Input}{Input}\SetKwInOut{Output}{Output}\SetKwInOut{Variables}{Variables}\SetKwInOut{Initialization}{Initialization}
		\Input{$Data$: household characteristics, time of adoption, costs, value of solar. $\beta$ discount factor. $R$ number of MCMC iterations. $\pi_1, \pi_2, \sigma_{\pBaseUtility}, \sigma_{\pMMoney}$ and $\sigma_{\pMPE}$ hyperparameters to control acceptance rate and convergence of MCMC.}
		\Output{$ \left\{\bm\theta^{(r)}\right\}_{r=1}^R = \left\{({\pBaseUtility}^{(r)}, {\pMMoney}^{(r)}, {\pMPE}^{(r)}, {\sigma}_{\xi}^{(r)})\right\}$ MCMC and posterior mean of structural parameters. For the sake of exposition, assume only one economic and one geographic segment.} 
		\Variables{$\pBaseUtility^{(r)}, \pMMoney^{(r)},\pMPE^{(r)}, \xi^{(r)}_i$ accepted values at iteration $r$. $\pBaseUtility', \pMMoney',\pMPE', \xi'_i$ proposal. $\pBaseUtility^{0}, \pMMoney^{0},\pMPE^{0}, \xi^{0}_i = 0$}
		\For{$r \in 1$ \KwTo $R$}{
			\tcp{ ********** HOUSEHOLDS RANDOM EFFECTS **********}
			Simulate $c$ and obtain $f$ using Eq \eqref{eq:ct}\;
			$\hat E_{it} = \EmaxApprox(\hat V_{it}, f)$, for all $i \in M, t = 1,\ldots, T$, using  Eq. \eqref{eq:emax_approx}\;
			$l = \LogLik(\pBaseUtility^{(r-1)}, \pMMoney^{(r-1)},\pMPE^{(r-1)}, \xi^{(r-1)}_i,\hat E_i, Data, \beta, c, f)$, using Eq. \eqref{eq:loglik}\;
			$\sigma_{\xi} \sim \InverseGamma(\pi_1, \pi_2)$, using Eq. \eqref{eq:xii}\;
			$\xi'_i \sim \Normal(0, \sigma_{\xi})$ using Eq. \eqref{eq:xii}\;
			$l'_i = \LogLik(\pBaseUtility^{(r-1)}, \pMMoney^{(r-1)},\pMPE^{(r-1)}, \xi'_i, \hat E_i, Data, \beta, c, f)$, using Eq. \eqref{eq:loglik}\;
			$u \sim \U(0,1)$ \; 
			\leIf {$\log(u) \leq \min(l'_i - l_i,0)$}{$\xi^{(r)}_i = \xi'_i$}{$\xi^{(r)}_i = \xi^{(r-1)}_i$}
			$\hat V_i = \VApprox(\pBaseUtility^{(r-1)}, \pMMoney^{(r-1)},\pMPE^{(r-1)}, \xi^{(r)}_i,\hat E_i, Data, \beta, c,f)$, using Eq. \eqref{eq:pseudo_vf}\;
			\tcp{ ********** BASE UTILITY **********}
			Simulate $c$ and obtain $f$ using Eq \eqref{eq:ct}\;
			\For{$i \in$ Households}{
				$\hat E_i = \EmaxApprox(\hat V_i, f)$, using  Eq. \eqref{eq:emax_approx}\;
				$l_i = \LogLik(\pBaseUtility^{(r-1)}, \pMMoney^{(r-1)},\pMPE^{(r-1)}, \xi^{(r)}_i,\hat E_i, Data, \beta, c, f)$ , using Eq. \eqref{eq:loglik}\;
			}
			$l = \sum l_i$\;
			$\pBaseUtility' \sim \Normal(\pBaseUtility^{(r-1)}, \sigma_{\pBaseUtility})$\;
			\lFor{$i \in$ Households}{
				$l'_i = \LogLik(\pBaseUtility', \pMMoney^{(r-1)},\pMPE^{(r-1)}, \xi^{(r)}_i,\hat E_i, Data, \beta, c,f)$, using Eq. \eqref{eq:loglik}
			}	
			$l' = \sum l'_i$\;
			$u \sim \U(0,1)$ \; 
			\leIf{$\log(u) \leq \min(l' - l,0)$}{$\pBaseUtility^{(r)} = \pBaseUtility'$}{$\pBaseUtility^{(r)} = \pBaseUtility^{(r-1)}$}
			\lFor{$i \in$ Households}{			
				$\hat V_i = \VApprox(\pBaseUtility^{(r)}, \pMMoney^{(r-1)},\pMPE^{(r-1)}, \xi^{(r)}_i,\hat E_i, Data, \beta, c, f)$, using Eq. \eqref{eq:pseudo_vf}\;
			}
			\tcp{ ********** ECONOMIC FACTOR **********}
			Simulate $c$ and obtain $f$ using Eq \eqref{eq:ct}\;
			\For{$i \in$ Households}{
				$\hat E_i = \EmaxApprox(\hat V_i, f)$, using  Eq. \eqref{eq:emax_approx}\;
				$l_i = \LogLik(\pBaseUtility^{(r)}, \pMMoney^{(r-1)},\pMPE^{(r-1)}, \xi^{(r)}_i,\hat E_i, Data, \beta, c, f)$, using Eq. \eqref{eq:loglik}\;
			}
			$l = \sum l_i$\;
			$\pMMoney' \sim \Normal(\pMMoney^{(r-1)}, \sigma_{\pMMoney})$\;
			\lFor{$i \in$ Households}{
				$l'_i = \LogLik(\pBaseUtility^{(r)}, \pMMoney',\pMPE^{(r-1)}, \xi^{(r)}_i,\hat E_i, Data, \beta, c, f)$, using Eq. \eqref{eq:loglik}
			}	
			$l' = \sum l'_i$\;
			$u \sim \U(0,1)$ \; 
			\leIf {$\log(u) \leq \min(l' - l,0)$}{$\pMMoney^{(r)} = \pMMoney'$}{$\pMMoney^{(r)} = \pMMoney^{(r-1)}$}	
			\lFor{$i \in$ Households}{			
				$\hat V_i = \VApprox(\pBaseUtility^{(r)}, \pMMoney^{(r)},\pMPE^{(r-1)}, \xi^{(r)}_i,\hat E_i, Data, \beta, c, f)$, using Eq. \eqref{eq:pseudo_vf}\;
			}
			\tcp{ ********** NEIGHBORHOOD EFFECTS **********}
			Simulate $c$ and obtain $f$ using Eq \eqref{eq:ct}\;
			\For{$i \in$ Households}{
				$\hat E_i = \EmaxApprox(\hat V_i, f)$, using  Eq. \eqref{eq:emax_approx}\;
				$l_i = \LogLik(\pBaseUtility^{(r)}, \pMMoney^{(r)},\pMPE^{(r-1)}, \xi^{(r)}_i,\hat E_i, Data, \beta, c, f)$, using Eq. \eqref{eq:loglik}
			}
			$l = \sum l_i$\;
			$\pMPE' \sim \Normal(\pMPE^{(r-1)}, \sigma_{\pMPE})$\;
			\lFor{$i \in$ Households}{
				$l'_i = \LogLik(\pBaseUtility^{(r)}, \pMMoney^{(r)},\pMPE', \xi^{(r)}_i,\hat E_i, Data, \beta, c, f)$, using Eq. \eqref{eq:loglik}
			}
			$l' = \sum l'_i$\;
			$u \sim \U(0,1)$ \; 
			\leIf {$\log(u) \leq \min(l' - l,0)$}{$\pMPE^{(r)} = \pMPE'$}{$\pMPE^{(r)} = \pMPE^{(r-1)}$}
			\lFor{$i \in$ Households}{$\hat V_i = \VApprox(\pBaseUtility^{(r)}, \pMMoney^{(r)},\pMPE^{(r)},\xi^{(r)}_i,\hat E_i, Data, \beta, c, f)$, using Eq. \eqref{eq:pseudo_vf}\;}
		}\;
		Posterior mean of structural parameters\; 
		$\bar{\pBaseUtility} = \mean(\pBaseUtility^{(r)})$; 
		$\bar{\pMMoney} = \mean(\pMMoney^{(r)})$; 
		$\bar{\pMPE} = \mean(\pMPE^{(r)})$; 
		$\bar{\sigma}_{\xi} = \mean(\sigma^{(r)}_{\xi})$; 
	\end{algorithm}
}
}
\end{center}


\subsection*{Counterfactual Policy Analysis: Simulation Algorithm}

Simulations based on Bayesian estimation usually propagate every posterior draw through the simulation. However, we use the posterior means $\bar{\bm\theta} = \{\bar{\pBaseUtility}, \bar{\bm\pMMoney}, \bar{\bm{\pMPE}},  \bar{\bm{\xi}} \}$ as the estimates of the structural parameters to reduce the computational burden because our policy simulation algorithm involves solving for an equilibrium that requires a large number of iterations, as we explain next. 

To simulate a specific policy scenario's outcome, we compute the value function and the probabilities of adoption using the fitted value iteration algorithm (see, for example, \cite{Stachurski2009}). Since our best Model (4)H includes only one radial ring, the state variable representing the number of neighboring adopters is a scalar $h_{it}$. Note that, in the dynamic decision problem, $h_{it}$ is endogenous because it is the aggregation of the adoption decisions made by the neighbors of household $i$. Since $h_{it}$ is unobserved in the simulated counterfactual scenarios, we need to compute the dynamic equilibrium to obtain the state transition function for $h_{it}$ based on the household’s rational expectations. 

Our simulation algorithm uses an outer loop to compute this state transition equilibrium. We assume that, at each period $t$, household $i$ uses the transition function $h_{i,t+1} = \lambda_i h_{it}$, with $\lambda_i \geq 1$, the conditional mean of the transition in equation \eqref{eq:ht}, to predict the cumulative number of its neighboring adopters in the next period until the total number of adopters equals the number of potential adopters $\hat{h}_i$ in its neighborhood; after reaching this capacity, $h_{it}$ remains constant. Observe that we permit the \textit{growth} rate $\lambda_i$ to vary by household. After simulating the new adopters in each period $t$, we re-calculate $h_{it}$ and re-estimate $\lambda_i$ as the updated state transition function. We iterate this process until the number of adopters per period and $\lambda_{i}$ converge; these iterations constitute the outer loop of our simulation algorithm. 

Because the PV system has a finite lifespan of 20 years (80 quarters), inside each outer loop we compute the finite-horizon value function using backward induction. For every household $i$ and $h=1,\ldots,\hat{h}_i$, we first use linear interpolation and the fitted value iteration algorithm to calculate the value function in the last period $t=T$. Then, for any period $t < T$, we use backward induction to calculate the value function. Specifically, the recursive representation of the value function is
\begin{align}
	\fValue_{it}(\pvCost_t,\sInstBase_{it},\bm{\latentUtility}_{it}; \bar{\bm\theta})
	= \max \left\{\fValueCE_{i0t} (\pvCost_t,\sInstBase_{it},\bm{\latentUtility}_{it}; \bar{\bm\theta}),\fValueCE_{i1t} (\pvCost_t,\sInstBase_{it},\bm{\latentUtility}_{it}; \bar{\bm\theta})\right\}.
\end{align}
In the simulation, the policy's net-cost schedule $c_t$ is a known input, and household $i$ predicts its next-period neighbor count with the outer-loop transition  $\min\{\lambda^{(k)}_i h^{(k)}_{it}, \hat{h}_i\}$, so the continuation value requires no expectation over states. At iteration $k$ of the ENAE algorithm below, the value function is therefore approximated by
\begin{align}
	\hat V^{(k)}_{it}({c}_{t},h^{(k)}_{it} ; \bar{\bm\theta}) &
	= \log\left(\exp\hat{\mathcal{V}}^{(k)}_{i0t}\left({c}_{t},h^{(k)}_{it}; \bar{\bm\theta}\right)+ \exp\hat{\mathcal{V}}^{(k)}_{i1t}\left({c}_{t},h^{(k)}_{it}; \bar{\bm\theta}\right)\right),
\end{align}
and
\begin{align}
	\hat{\mathcal{V}}^{(k)}_{i0t}\left({c}_{t},h^{(k)}_{it} ; \bar{\bm\theta}\right)&
	=  \df\, \hat V^{(k)}_{it+1}\left({c}_{t+1}, \min\{\lambda^{(k)}_i h^{(k)}_{it}, \hat{h}_i\}; \bar{\bm\theta}\right)\\
	\hat{\mathcal{V}}^{(k)}_{i1t}\left({c}_{t},h^{(k)}_{it} ; \bar{\bm\theta}\right)&
	=\bar{\pBaseUtility} + \bar{\pMMoney}_{e_i} \textrm{NPV}_{it}\left({c}_t\right) +\bar{\pMPE}_{g_i} h^{(k)}_{it} + \bar{\randomEffects}_{i}.
\end{align}

Starting at the last period $T$, we compute each iteration's value function by backward induction. The equilibrium itself is computed by the outer loop: across iterations $k$, the value functions and the implied adopter paths converge to the fixed point of the ENAE algorithm below, $V_i^{(k)} \to V_i$ for each household $i$ and each value of the state $(\pvCost_t, \sInstBase_{it})$. At each iteration, the algorithm records the per-period adoption probabilities $P_{it}$, which embed the probability of not having adopted before $t$, and aggregates them into the expected number of adopters $A^{(k)}_t$.

In our estimation of the model detailed in Section~4.2, we cluster the households with similar characteristics in each census block group, and treat each cluster as a household weighted by the cluster size. Similarly, in the simulation, we compute the adoption probability of the representative household in each cluster, and then calculate the expected number of adopters in the cluster by multiplying this probability by the cluster size.

\textbf{Simulation cost.} A single counterfactual policy path solves the adoption equilibrium below (the fixed point of the ENAE algorithm) in about five outer-loop iterations and roughly 80 minutes of elapsed time on the workstation used for estimation; each policy evaluated in the counterfactual analysis is an independent run of this kind.

\begin{center}
\textbf{Expected Number of Adopters in Equilibrium (ENAE) Algorithm}
\vskip 10pt
{	
{\scriptsize 
	\begin{algorithm}[H]
		\SetAlgoLined
		\DontPrintSemicolon
		\SetKwInOut{Input}{Input}\SetKwInOut{Output}{Output}\SetKwInOut{Variables}{Variables}\SetKwInOut{Initialization}{Initialization}
		\Input{Schedule of net cost $c_t, t = 1,\ldots,T$.}
		\Output{Expected number of adopters for periods $t = 1,\ldots,T$.}
		\Initialization{No household has adopted in period $t = 1$. Iteration $k \leftarrow 1$. $\lambda_i = 1 $ for each household $i$}
		\While{the numbers of adopters per period have not converged}{
			\For{$i \in$ Households}{
				household $i$ adopts in $t = 1$ with probability $P_{i1} =  p_{i}(\pvCost_1,0;\bar{\bm\theta})$\;
			}
			\For(\tcp*[f]{the previous loop must finish first}){$i \in$ Households}{
				$h^{(k)}_{i1} =$ expected number of adopters at a 1-mile radius of household $i$ at the end of period $t = 1$\;
			}
			$A^{(k)}_{1} =$ expected number of adopters in period $1$ of iteration $k$\;
			\For{$t \in 2$ \KwTo $T$}{
				\For{$i \in$ Households that have not yet adopted}{
					household $i$ adopts in period $t$ with probability $P_{it} =  p_{i}(\pvCost_t,h^{(k)}_{it-1};\bar{\bm\theta}) * (1 - \sum_{\tau = 1}^{t-1}P_{i\tau})$\;
				}
				\For{$i \in$ Households}{
					$h^{(k)}_{it} =$ expected number of adopters at a 1-mile radius of household $i$ at the end of period $t$\;
				}
				$A^{(k)}_{t} =$ expected number of adopters in period $t$ of iteration $k$\;
			}
			\For{$i \in$ Households}{
				Estimate linear regression to update $\lambda_i$: $h^{(k)}_{it+1} = \lambda_i h^{(k)}_{it}$\; 
			}
			$k \leftarrow k+1$;
		}	
	\end{algorithm}
}
}
\end{center}

\clearpage
\section{Supporting Figures for Bayesian Posterior Sampling and Policy Simulations} \label{sec:SupportingFigures}

\textbf{Retained sample and convergence diagnostics.} We run the sampler in Appendix C for 18{,}000 iterations under a 4-hour wall-clock limit on a single processor. The chain reaches its stationary distribution within the first 2{,}500 iterations, which we discard as burn-in, and we compute the posterior means and 95\% credible intervals of the structural parameters from the retained draws; the parameter trace and density plots below display the chains and confirm convergence. Mixing differs across parameters: the heterogeneity scale and the majority-segment net-present-value coefficient yield effective sample sizes in the thousands, whereas the remaining coefficients, and the neighborhood-effect coefficients in particular, are more strongly autocorrelated and yield effective sample sizes in the tens to low hundreds; we therefore run each chain well past burn-in rather than stopping once stationarity is reached. The reported estimates are stable under longer sampling: extending the run from 4 to 8 hours leaves each posterior mean within roughly a quarter of a posterior standard deviation of its reported value, with the exception of the weakly-mixed urban neighborhood-effect coefficient, whose mean shifts by about a third of a standard deviation, consistent with its low effective sample size.

\begin{figure}[H]
{	\centering
		\includegraphics[scale=0.42]{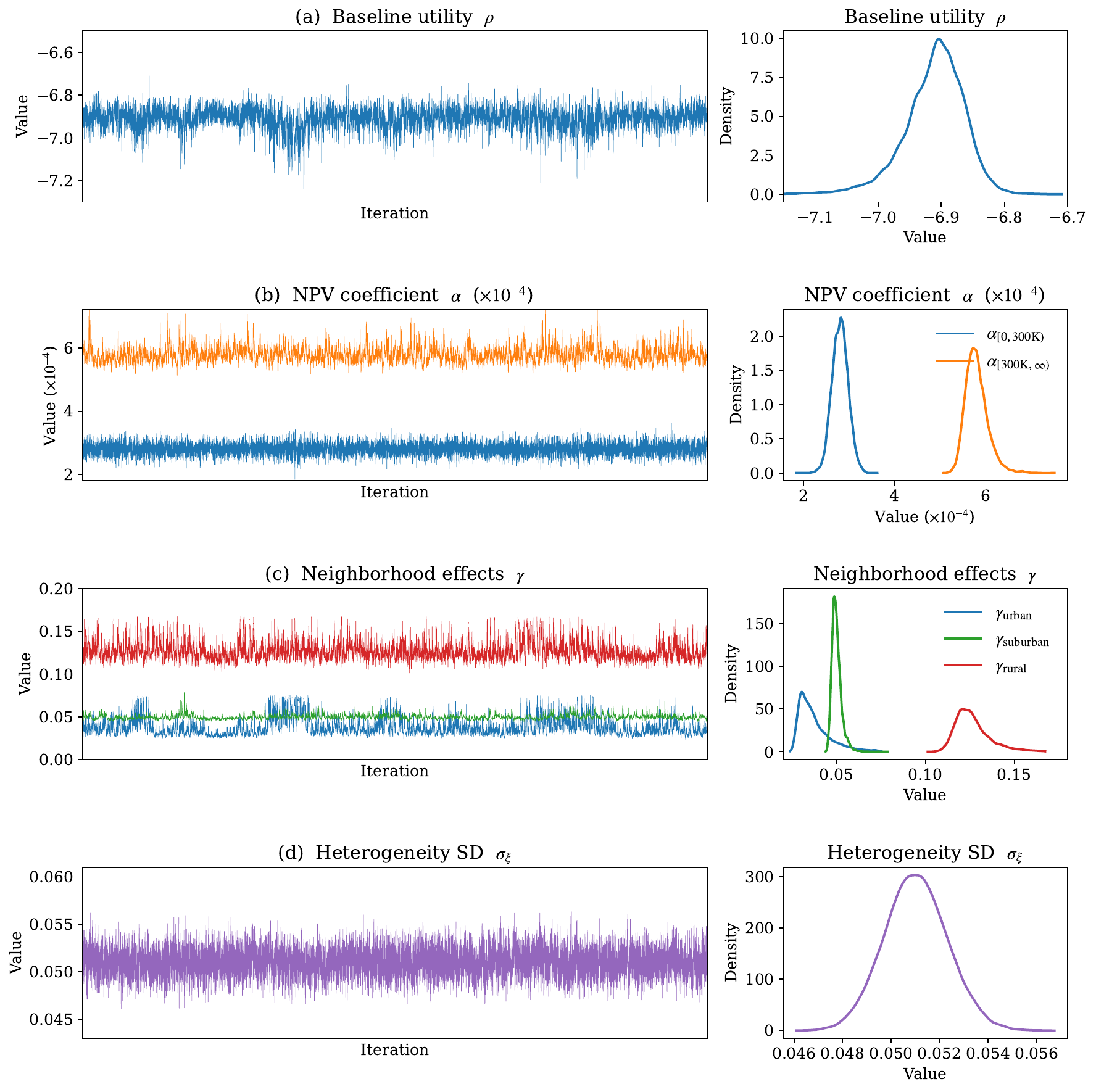}
		\caption{MCMC trace plots of the best model, Model (4)H, structural parameters.\label{fig:mcmc1_exVIa-RE}}
	}
	{\footnotesize \setlength{\baselineskip}{10pt}
		\noindent \emph{Note.} From top to bottom: (a) $\pBaseUtility$, the constant term of the utility function; (b) $\pMMoney$, the preference for the economic benefit of solar per economic segment; (c) $\pMPE$, the neighborhood effects per geographic segment; and (d) $\sigma_{\xi}$, the standard deviation of the household random effects. \textit{Note}: the first 2{,}500 iterations, the \textit{burn-in} period, are discarded because the initial samples may not accurately represent the target distribution due to the starting point of the chain. 
		\par}
\end{figure}

\begin{figure}[H]
	\begin{center}
		\includegraphics[width=0.95\textwidth]{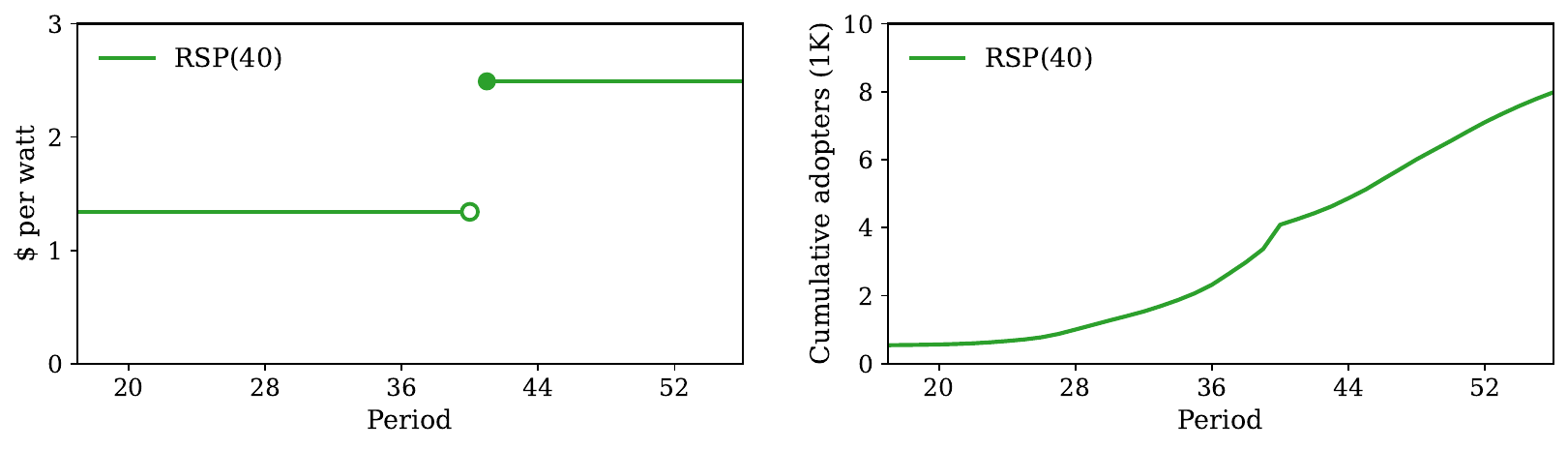}
		\caption{Counterfactual trajectory under the one-step rebate switching policy RSP(40).\label{fig:ECrsp40}}
	\end{center}
	
	{\footnotesize 	\setlength{\baselineskip}{10pt}
	\noindent \emph{Note.} (a) net cost per watt; (b) cumulative adopters. The adoption rate rises as the pre-announced rebate switching period approaches and drops once the rebate ends, producing the kink at period~40 discussed in Section~6.1 of the paper.
\par}

\end{figure}
\clearpage
\section{Additional Model Comparison and Selection Results}\label{sec:DetailResultsTables}

In this appendix, we report the results from the additional models we tested. To assess the robustness of our model selection, we compare out-of-sample predictive performance using both the mean absolute error (MAE) and the mean absolute percentage error (MAPE). The results are reported in Table~\ref{tab:summaryResultsAllModels}.

\begin{table}[H]
	\begin{center}
		\caption{Comparison of various model specifications.\label{tab:summaryResultsAllModels}}
		{\scriptsize
			\begin{tabular}{lllrrrr}
				\toprule
				& \multicolumn{1}{c}{\textbf{Economic Segment (${E}$)}} & \multicolumn{1}{c}{\textbf{Neighborhood Radial Ring (${D}$)}} & \multicolumn{2}{c} {\textbf{MAE (StdMAE)}} & \multicolumn{2}{c} {\textbf{MAPE (StdMAPE) [\%]}}\bigstrut\\
				\cline{4-5} \cline{6-7} 
				\multicolumn{1}{c}{Model} & \multicolumn{1}{p{3cm}}{\centering Home Mkt Value (\$’000) } & \multicolumn{1}{p{3cm}}{\centering Radial Distance Brackets (Miles)} & \multicolumn{1}{c}{NH} & \multicolumn{1}{c}{H}& \multicolumn{1}{c}{NH} & \multicolumn{1}{c}{H}  \bigstrut\\\midrule
				(1)  &  [0,\Inf)  &  no neighborhood effects &  
				67.726 (5.994) & 71.482 (4.974) &  
				29.778 (2.852) & 31.552 (2.373)
				\bigstrut\\
				(2)  &  [0,\Inf)  &  [0,1)  &
				37.343 (12.247) & 32.656 (6.721) &
				19.314 (6.036) & 16.956 (3.496)
				\bigstrut \\
				(3)  &  [0,300),[300,\Inf)  &   no neighborhood effects &
				85.001 (3.871) & 85.511 (3.661) & 
				38.144 (1.861) & 38.386 (1.764)   
				\bigstrut \\
				(4)  &  [0,300),[300,\Inf)  &  [0,1)  &
				45.768 (14.155) & 31.845 (2.412) &
				23.407 (6.641) & 15.599 (1.552) 
				\bigstrut \\
				(5)  &  [0,300),[300,600),[600,\Inf)  &   no neighborhood effects &
				69.011 (4.542) & 69.627 (4.353) &
				30.889 (2.151) & 31.177 (2.061) 
				\bigstrut\\
				(6)  &   [0,300),[300,600),[600,\Inf)  &  [0,1)  &
				48.534 (5.587) & 49.142 (5.437) & 
				23.904 (2.925) & 24.118 (2.797) 
				\bigstrut\\\hline
				(7)  &  [0,\Inf)  & [0,0.5) & 
				37.331 (4.904) & 32.106 (2.006) &
				16.591 (1.703) & 15.074 (1.161) 
				\bigstrut \\
				(8)  &  [0,\Inf)  &  [0,0.5),[0.5,1)      & 
				33.516 (6.168) & 30.550 (4.012) &
				16.709 (3.001) & 15.925 (2.168)
				\bigstrut \\
				(9)  &  [0,\Inf)  &  [0,1),[1,SB) & 
				37.969 (6.740) & 38.463 (6.621) &
				16.375 (2.936) & 16.503 (2.942) 
				\bigstrut \\
				(10)  &  [0,300),[300,\Inf)  &  [0,0.5) & 
				31.651 (1.597) & 31.373 (1.180) &
				14.479 (0.931) & 14.197 (0.638) 
				\bigstrut \\
				(11) &  [0,300),[300,\Inf)  &  [0,0.5),[0.5,1)  & 
				33.055 (3.929) & 31.826 (2.613)&
				17.141 (2.244) & 16.028 (1.534)
				\bigstrut \\
				(12)  &  [0,300),[300,\Inf)  &  [0,1),[1,SB) &  
				38.152 (6.754) & 39.365 (7.304)&
				16.576 (3.046) & 17.112 (3.319)
				\bigstrut \\
				(13)  &  [0,300),[300,600),[600,\Inf)  &  [0,0.5)  &  
				44.348 (3.174) & 43.184 (3.069) &
				20.972 (1.779) & 20.386 (1.741) 
				\bigstrut \\
				(14)   &  [0,300),[300,600),[600,\Inf)  &  [0,0.5),[0.5,1)  &
				51.830 (7.820) & 48.625 (6.994)&
				26.044 (3.714) & 24.492 (3.421)
				\bigstrut \\
				(15)  &   [0,300),[300,600),[600,\Inf)  &  [0,1),[1,SB) & 
				37.077 (3.032) & 36.472 (2.768) &
				16.473 (1.699) & 16.166 (1.577) 
				\bigstrut[b] \\
				\bottomrule
				\multicolumn{7}{l}{Note: (a) NH: without heterogeneity, H: with heterogeneity;}\\ 
				\multicolumn{7}{l}{$\quad\quad~~$(b) Training periods: 1-33, test periods: 34-38;}\\
				\multicolumn{7}{l}{$\quad\quad~~$(c) In estimation: $\hat{h}_{it+1} = h_{it+1}$, in prediction: $\hat{h}_{it+1} = \lambda_i \tilde{h}_{it}$;}\\ 
				\multicolumn{7}{l}{$\quad\quad~~$(d) MAE is calculated from 1000 simulations for out-of-sample periods set $\mathcal{T}$ as the average absolute error between} \\
				\multicolumn{7}{l}{$\quad\quad\quad~~~~$the actual number of adopters $N_t$ in period $t$ and the predicted number $\hat{N}^k_t$ in simulation $k$ and period $t$,} \\
				\multicolumn{7}{l}{$\quad\quad\quad~~~~$measured in number of adopters; MAPE is the corresponding mean absolute percentage error.
				} \\
			\end{tabular}\\
		}
	\end{center}
\end{table}
\clearpage
\subsection*{Stability of the structural parameters across specifications}\label{sec:ECparamstability}

This section tests the robustness of our results to the model specification, extending the comparison in Table~\ref{tab:summaryResultsAllModels} across the economic segmentations $E$ and the neighborhood radial rings $D$. Two facts anchor the comparison. First, extending the neighborhood beyond one mile adds only a negligible neighborhood effect: in the two-ring models (9), (12), and (15), the coefficient on adopters beyond one mile is near $-4\times10^{-4}$, three orders of magnitude below the inner-ring coefficient (Table~\ref{tab:ECparamGamma}). Second, the leading one-mile and half-mile specifications predict almost equally well: their out-of-sample MAEs differ by less than 2\% (for the two-segment models, 31.85 for Model~(4)H against 31.37 for Model~(10)H). Since predictive accuracy alone does not separate these specifications, we examine the structural parameter estimates, reported for the twelve heterogeneity (H) specifications with neighborhood effects, models (2), (4), (6), and (7)--(15).

The estimates are stable across these alternative specifications (Tables~\ref{tab:ECparamRhoAlpha} and~\ref{tab:ECparamGamma} and Figure~\ref{fig:ECparamstability}). The heterogeneity standard deviation $\sigma_{\xi}$ is statistically indistinguishable across the twelve specifications, its $95\%$ credible intervals all overlapping near $0.051$. The baseline utility $\rho$ stays within the narrow band $[-6.85,-6.64]$ and is precisely estimated (low posterior variance) in every specification. The net-present-value coefficient $\alpha$ is larger for higher-value home segments wherever the segment boundaries align, so higher-value households ascribe more weight to the economic benefit of a PV system: the low-value coefficient $\alpha_{[0,300)}$ has eight mutually overlapping $95\%$ credible intervals around $3\times10^{-4}$, the two-segment high-value coefficient overlapping intervals around $6.2\times10^{-4}$, and the three-segment $[300,600)$ coefficient overlapping intervals around $8.5\times10^{-4}$, so no specification's estimate is distinguishable from the others at the $95\%$ level; the single-segment models, which pool all home values into one economic segment, estimate a single coefficient near $4.1\times10^{-4}$. The only imprecisely identified coefficient is the top $[600,\infty)$ bin, whose credible intervals are wide and rest on few very-high-value adopters, consistent with the expanding-window evidence in Appendix~F. The neighborhood effect preserves the ordering $\gamma_{\mathrm{rural}}>\gamma_{\mathrm{suburban}}>\gamma_{\mathrm{urban}}$ in every one-mile-ring specification, and a ring standing alone recovers the same one-mile coefficients as when it is the inner ring of a two-ring model, the standalone and inner-ring $\gamma_{\mathrm{rural}}$ intervals overlapping (means $0.121$ and $0.135$).

The neighborhood coefficient does scale with the ring width, which is mechanical: a half-mile ring aggregates far fewer installed systems, so the per-neighbor coefficient is larger and, in sparsely built rural areas, noisier (the half-mile rural coefficient carries wider credible intervals, for example $(0.13,0.40)$ in Model~(8) against $(0.11,0.16)$ for the one-mile ring in Model~(4)). This is why the half-mile specifications that match Model~(4)H on out-of-sample error, such as Model~(10), do so with a less stable and less interpretable near-neighbor coefficient and a geographic ordering that is no longer monotone. The one-mile ring therefore delivers the same forecast accuracy to within the noise while giving a well-identified, monotone neighborhood structure, and we use the one-mile radial ring, with two economic segments (Model~(4)H), as the parsimonious specification of our model.

\begin{table}[H]
	\caption{Posterior mean and 95\% credible interval of $\rho, \sigma_{\xi},\alpha$ of Models (2), (4), and (6) to (15).\label{tab:ECparamRhoAlpha}}
	{\centering
		\footnotesize
	\setlength{\tabcolsep}{4pt}\renewcommand{\arraystretch}{1.1}
	\resizebox{\textwidth}{!}{%
		\begin{tabular}{clccccc}
			\toprule
			Model & Economic segments $E$ & $\rho$ & $\sigma_{\xi}$ & $\alpha_1$ & $\alpha_2$ & $\alpha_3$ \\
			\midrule
				(2) & $[0,\infty)$ & $-$6.80990 & 0.05094 & 0.00041 &  &  \\
				 &  & {\scriptsize ($-$6.94269, $-$6.70937)} & {\scriptsize (0.04839, 0.05364)} & {\scriptsize (0.00038, 0.00047)} &  &  \\
				\addlinespace[3pt]
				(4) & $[0,300),[300,\infty)$ & $-$6.79158 & 0.05098 & 0.00029 & 0.00064 &  \\
				 &  & {\scriptsize ($-$6.89602, $-$6.71202)} & {\scriptsize (0.04851, 0.05368)} & {\scriptsize (0.00026, 0.00032)} & {\scriptsize (0.00059, 0.00070)} &  \\
				\addlinespace[3pt]
				(6) & $[0,300),[300,600),[600,\infty)$ & $-$6.81902 & 0.05180 & 0.00029 & 0.00086 & 0.00052 \\
				 &  & {\scriptsize ($-$6.91498, $-$6.73623)} & {\scriptsize (0.04924, 0.05443)} & {\scriptsize (0.00026, 0.00033)} & {\scriptsize (0.00081, 0.00093)} & {\scriptsize (0.00024, 0.00097)} \\
				\addlinespace[3pt]
				(7) & $[0,\infty)$ & $-$6.74818 & 0.05100 & 0.00043 &  &  \\
				 &  & {\scriptsize ($-$6.83022, $-$6.67359)} & {\scriptsize (0.04852, 0.05378)} & {\scriptsize (0.00039, 0.00049)} &  &  \\
				\addlinespace[3pt]
				(8) & $[0,\infty)$ & $-$6.81625 & 0.05102 & 0.00041 &  &  \\
				 &  & {\scriptsize ($-$6.90705, $-$6.72905)} & {\scriptsize (0.04848, 0.05360)} & {\scriptsize (0.00037, 0.00045)} &  &  \\
				\addlinespace[3pt]
				(9) & $[0,\infty)$ & $-$6.63613 & 0.05189 & 0.00041 &  &  \\
				 &  & {\scriptsize ($-$6.72372, $-$6.54712)} & {\scriptsize (0.04929, 0.05467)} & {\scriptsize (0.00038, 0.00045)} &  &  \\
				\addlinespace[3pt]
				(10) & $[0,300),[300,\infty)$ & $-$6.74915 & 0.05109 & 0.00031 & 0.00062 &  \\
				 &  & {\scriptsize ($-$6.82449, $-$6.67770)} & {\scriptsize (0.04850, 0.05379)} & {\scriptsize (0.00028, 0.00034)} & {\scriptsize (0.00058, 0.00068)} &  \\
				\addlinespace[3pt]
				(11) & $[0,300),[300,\infty)$ & $-$6.81931 & 0.05107 & 0.00030 & 0.00062 &  \\
				 &  & {\scriptsize ($-$6.90434, $-$6.74042)} & {\scriptsize (0.04863, 0.05375)} & {\scriptsize (0.00026, 0.00033)} & {\scriptsize (0.00058, 0.00068)} &  \\
				\addlinespace[3pt]
				(12) & $[0,300),[300,\infty)$ & $-$6.64954 & 0.05114 & 0.00029 & 0.00060 &  \\
				 &  & {\scriptsize ($-$6.73396, $-$6.56284)} & {\scriptsize (0.04864, 0.05385)} & {\scriptsize (0.00026, 0.00032)} & {\scriptsize (0.00057, 0.00065)} &  \\
				\addlinespace[3pt]
				(13) & $[0,300),[300,600),[600,\infty)$ & $-$6.78787 & 0.05099 & 0.00032 & 0.00086 & 0.00034 \\
				 &  & {\scriptsize ($-$6.86890, $-$6.71528)} & {\scriptsize (0.04842, 0.05369)} & {\scriptsize (0.00029, 0.00035)} & {\scriptsize (0.00081, 0.00093)} & {\scriptsize (0.00017, 0.00074)} \\
				\addlinespace[3pt]
				(14) & $[0,300),[300,600),[600,\infty)$ & $-$6.84883 & 0.05097 & 0.00030 & 0.00085 & 0.00026 \\
				 &  & {\scriptsize ($-$6.93669, $-$6.76698)} & {\scriptsize (0.04848, 0.05362)} & {\scriptsize (0.00027, 0.00034)} & {\scriptsize (0.00080, 0.00091)} & {\scriptsize (0.00015, 0.00041)} \\
				\addlinespace[3pt]
				(15) & $[0,300),[300,600),[600,\infty)$ & $-$6.68338 & 0.05106 & 0.00029 & 0.00082 & 0.00033 \\
				 &  & {\scriptsize ($-$6.76686, $-$6.59498)} & {\scriptsize (0.04857, 0.05379)} & {\scriptsize (0.00026, 0.00033)} & {\scriptsize (0.00078, 0.00087)} & {\scriptsize (0.00016, 0.00061)} \\
			\bottomrule
		\end{tabular}}}
	
\vspace{0.2cm}
{\footnotesize 
	\setlength{\baselineskip}{10pt}
	\noindent \emph{Note.}
{Posterior mean and 95\% credible interval of the baseline utility $\rho$, the heterogeneity standard deviation $\sigma_{\xi}$, and the net-present-value coefficient $\alpha$ across specifications. The coefficients $\alpha_1,\alpha_2,\alpha_3$ correspond, in order, to the economic segments listed in $E$.}
{Empty cells indicate segments not present in that model.}
\par}	
	
\end{table}

\begin{table}[H]
	\caption{Posterior mean and 95\% credible interval of $\gamma$ of Models (2), (4), and (6) to (15).\label{tab:ECparamGamma}}
{\centering\footnotesize
	\setlength{\tabcolsep}{4pt}\renewcommand{\arraystretch}{1.1}
	\resizebox{\textwidth}{!}{%
		\begin{tabular}{clcccccc}
			\toprule
			& & \multicolumn{3}{c}{Inner ring} & \multicolumn{3}{c}{Outer ring}\\
			\cmidrule(lr){3-5}\cmidrule(lr){6-8}
			Model & Rings $D$ & urban & suburban & rural & urban & suburban & rural \\
			\midrule
				(2) & $[0,1)$ & 0.04703 & 0.05135 & 0.13372 &  &  &  \\
				 &  & {\scriptsize (0.02800, 0.08690)} & {\scriptsize (0.04685, 0.05984)} & {\scriptsize (0.11403, 0.19105)} &  &  &  \\
				\addlinespace[3pt]
				(4) & $[0,1)$ & 0.03611 & 0.04932 & 0.12135 &  &  &  \\
				 &  & {\scriptsize (0.02695, 0.05801)} & {\scriptsize (0.04518, 0.05853)} & {\scriptsize (0.10514, 0.15563)} &  &  &  \\
				\addlinespace[3pt]
				(6) & $[0,1)$ & 0.03726 & 0.04689 & 0.10768 &  &  &  \\
				 &  & {\scriptsize (0.02696, 0.06488)} & {\scriptsize (0.04338, 0.05249)} & {\scriptsize (0.09368, 0.13662)} &  &  &  \\
				\addlinespace[3pt]
				(7) & $[0,0.5)$ & 0.14120 & 0.06114 & 0.35359 &  &  &  \\
				 &  & {\scriptsize (0.12194, 0.17290)} & {\scriptsize (0.05654, 0.06987)} & {\scriptsize (0.30477, 0.46793)} &  &  &  \\
				\addlinespace[3pt]
				(8) & $[0,0.5),[0.5,1)$ & 0.12687 & 0.05445 & 0.21106 & 0.01648 & 0.04654 & 0.10549 \\
				 &  & {\scriptsize (0.09766, 0.16437)} & {\scriptsize (0.04946, 0.06066)} & {\scriptsize (0.12858, 0.40122)} & {\scriptsize (0.00339, 0.03799)} & {\scriptsize (0.03185, 0.07586)} & {\scriptsize (0.05398, 0.16656)} \\
				\addlinespace[3pt]
				(9) & $[0,1),[1,\infty)$ & 0.03440 & 0.04932 & 0.13863 & $-$0.00037 & $-$0.00015 & $-$0.00035 \\
				 &  & {\scriptsize (0.02905, 0.04308)} & {\scriptsize (0.04576, 0.05456)} & {\scriptsize (0.12058, 0.16632)} & {\scriptsize ($-$0.00057, $-$0.00019)} & {\scriptsize ($-$0.00029, $-$0.00001)} & {\scriptsize ($-$0.00060, $-$0.00009)} \\
				\addlinespace[3pt]
				(10) & $[0,0.5)$ & 0.13585 & 0.05878 & 0.32864 &  &  &  \\
				 &  & {\scriptsize (0.11765, 0.17079)} & {\scriptsize (0.05438, 0.06551)} & {\scriptsize (0.28539, 0.40965)} &  &  &  \\
				\addlinespace[3pt]
				(11) & $[0,0.5),[0.5,1)$ & 0.11830 & 0.05309 & 0.19035 & 0.01387 & 0.04687 & 0.09366 \\
				 &  & {\scriptsize (0.09466, 0.14641)} & {\scriptsize (0.04874, 0.05987)} & {\scriptsize (0.11599, 0.27162)} & {\scriptsize (0.00445, 0.03165)} & {\scriptsize (0.03127, 0.07442)} & {\scriptsize (0.05600, 0.14257)} \\
				\addlinespace[3pt]
				(12) & $[0,1),[1,\infty)$ & 0.03578 & 0.04764 & 0.13672 & $-$0.00039 & $-$0.00013 & $-$0.00046 \\
				 &  & {\scriptsize (0.02994, 0.04855)} & {\scriptsize (0.04436, 0.05201)} & {\scriptsize (0.11822, 0.16704)} & {\scriptsize ($-$0.00067, $-$0.00020)} & {\scriptsize ($-$0.00026, 0.00001)} & {\scriptsize ($-$0.00074, $-$0.00016)} \\
				\addlinespace[3pt]
				(13) & $[0,0.5)$ & 0.13146 & 0.05690 & 0.30728 &  &  &  \\
				 &  & {\scriptsize (0.11391, 0.16309)} & {\scriptsize (0.05219, 0.06477)} & {\scriptsize (0.26092, 0.42666)} &  &  &  \\
				\addlinespace[3pt]
				(14) & $[0,0.5),[0.5,1)$ & 0.11554 & 0.05017 & 0.17273 & 0.01577 & 0.04415 & 0.08735 \\
				 &  & {\scriptsize (0.09113, 0.14541)} & {\scriptsize (0.04676, 0.05502)} & {\scriptsize (0.08742, 0.26271)} & {\scriptsize (0.00499, 0.03407)} & {\scriptsize (0.03084, 0.07583)} & {\scriptsize (0.04775, 0.14083)} \\
				\addlinespace[3pt]
				(15) & $[0,1),[1,\infty)$ & 0.03388 & 0.04539 & 0.13092 & $-$0.00037 & $-$0.00011 & $-$0.00052 \\
				 &  & {\scriptsize (0.02893, 0.04551)} & {\scriptsize (0.04277, 0.04905)} & {\scriptsize (0.11465, 0.15367)} & {\scriptsize ($-$0.00061, $-$0.00017)} & {\scriptsize ($-$0.00023, 0.00002)} & {\scriptsize ($-$0.00078, $-$0.00025)} \\
			\bottomrule
		\end{tabular}}}
	
	\vspace{0.2cm}
{\footnotesize 
\setlength{\baselineskip}{10pt}
\noindent \emph{Note.} 		Posterior mean and 95\% credible interval of the neighborhood effect $\gamma$ by geographic segment and radial ring across specifications. Inner and outer denote the first and second rings in $D$; single-ring models have no outer ring, and the beyond-one-mile (outer) effects in models (9), (12), and (15) are negligible, three orders of magnitude below the inner-ring effects.
\par}
	
\end{table}

\begin{figure}[H]
{	\centering
	\includegraphics[width=\textwidth]{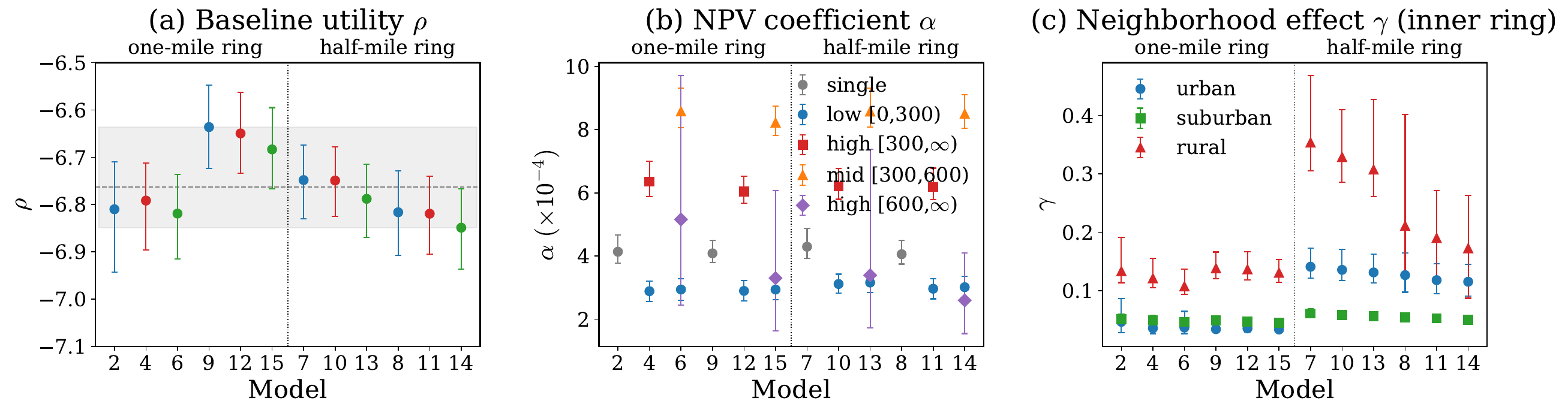}
	\caption{Stability of the structural parameters across economic-segment and radial-ring specifications. \label{fig:ECparamstability}}
}
{\footnotesize 
	\setlength{\baselineskip}{9pt}
(a) baseline utility $\rho$, with the dashed line the mean and the shaded band the range; (b) the net-present-value coefficient $\alpha$ by economic segment; (c) the neighborhood effect $\gamma$ for the inner ring by geographic segment. Whiskers are $95\%$ credible intervals. Models left of the dotted line use a one-mile inner ring; those to the right use a half-mile inner ring.	
	\noindent \emph{Note.}
\par}

\end{figure}

\begin{table}[H]
	\begin{center}
		\caption{Predicted performance of models using alternative function specifications of $\textbf{h}$ in the utility function.\label{tab:summaryResults3}}
		{\scriptsize 
			\begin{tabular}{ccc}
				\toprule
				\multicolumn{1}{c}{\textbf{Specification of $\textbf{h}$ in utility}} & \multicolumn{1}{c} {\textbf{MAE (StdMAE)}} & \multicolumn{1}{c} {\textbf{MAPE (StdMAPE) [\%]}}\bigstrut\\
				\midrule
				$h$ & 31.845 (2.412)& 15.599 (1.552) 
				\bigstrut\\
				$\log(1+h)$ & 106.901 (6.648) & 54.746 (4.221) \bigstrut \\
				$\sqrt{h}$ & 74.434 (5.631) & 37.672 (4.208)  \bigstrut \\
				\bottomrule
				\multicolumn{3}{l}{Note: (a) Training periods: 1-33, test periods: 34-38;}\\
				\multicolumn{3}{l}{$\quad\quad~~$(b) In estimation: $\hat{h}_{it+1} = h_{it+1}$, in prediction: $\hat{h}_{it+1} = \lambda_i \hat{h}_{it}$;}\\
				\multicolumn{3}{l}{$\quad\quad~~$(c) Economic Segments (${E}$): $[0,300),[300, \infty)$;}\\
				\multicolumn{3}{l}{$\quad\quad~~$(d) Neighborhood Radial Ring (${D}$): $[0,1)$.}\\
			\end{tabular}
		}
	\end{center}
\end{table}


\clearpage
\section{Estimation with Expanding Data Window}\label{sec:ECexpandingwindow}
We assess the consistency of the estimates through an expanding-window re-estimation. We re-estimate the selected model, Model~(4)H, on the first $T$ quarters of the sample at even windows from $T=26$ (2010Q2) through $T=36$ (2012Q4), together with the full estimation window $T=33$ (2012Q1) reported in the paper, and track the posterior mean and 95\% credible interval of each structural parameter as the window grows. Figure~\ref{fig:ECexpwindow} plots the trajectories grouped by parameter family, on axes wide enough to span the estimates at every window; Table~\ref{tab:ECexpwindow} reports the underlying per-window estimates.

The net-present-value coefficients $\alpha$ of the two economic segments cross between windows $T=30$ and $T=32$ (Figure~\ref{fig:ECexpwindow}b). In the early windows ($T\le 30$, through 2011Q2) the coefficient for the high-value $[300\mathrm{K},\infty)$ segment sits below that of the majority $[0,300\mathrm{K})$ segment, estimated near $2\times10^{-4}$: few high-value households have adopted by then, so their net-present-value coefficient has greater estimation error, reflected in wider interval estimates. As the window extends through the high-adoption quarters of 2011--2012, more high-value adoptions enter the sample and $\alpha_{[300\mathrm{K},\infty)}$ rises, crossing above $\alpha_{[0,300\mathrm{K})}$ at $T=32$ and settling near its full-sample value of $6.4\times10^{-4}$ at $T=33$. The crossover marks improved identification of the high-value segment's price response rather than a change in preferences: once enough high-value adoptions are observed, the ordering $\alpha_{[300\mathrm{K},\infty)}>\alpha_{[0,300\mathrm{K})}$ reported for the full sample is recovered and holds for every window from $T=32$ onward.

The remaining parameters support the stability reported in Section~5 of the paper (Figure~\ref{fig:ECexpwindow}; Table~\ref{tab:ECexpwindow}). The standard deviation of household heterogeneity $\sigma_{\xi}$ varies only between $0.0504$ and $0.0515$, the baseline utility $\rho$ stays within a narrow band ($-7.19$ to $-6.76$), and the coefficient of the majority segment $\alpha_{[0,300\mathrm{K})}$ holds its order of magnitude, near $3$ to $4\times10^{-4}$. The coefficients $\gamma$ for neighborhood effects are a little more sensitive to the sample sizes: their estimates are larger in the short early windows, with wider interval estimates, and gradually level off as the larger sampling window absorbs more data, when falling net costs account for a growing share of adoption. Their economic interpretation is nevertheless invariant over time: the ordering $\gamma_{\mathrm{rural}}>\gamma_{\mathrm{suburban}}>\gamma_{\mathrm{urban}}$ is preserved and no coefficient changes sign. By the full estimation window $T=33$, the heterogeneity scale, the baseline utility, and the majority-segment coefficient have settled and the high-value coefficient is identified, the stability a policymaker needs before relying on the estimates for counterfactual analysis.

\begin{table}[!h]
\caption{Posterior mean and 95\% credible interval of the structural parameters of Model~(4)H across expanding estimation windows\label{tab:ECexpwindow}}
{\centering
{\footnotesize
\setlength{\tabcolsep}{4pt}\renewcommand{\arraystretch}{1.1}
\resizebox{\textwidth}{!}{%
\begin{tabular}{lccccccc}
\toprule
Parameter & $T=26$ & $T=28$ & $T=30$ & $T=32$ & \textbf{$T=33$} & $T=34$ & $T=36$ \\
\midrule
$\rho$ & $-$7.10438 & $-$7.14538 & $-$7.19179 & $-$7.04018 & \textbf{$-$6.79158} & $-$6.76025 & $-$6.84665 \\
 & {\scriptsize ($-$7.19273, $-$7.01396)} & {\scriptsize ($-$7.22766, $-$7.07685)} & {\scriptsize ($-$7.26856, $-$7.10772)} & {\scriptsize ($-$7.11829, $-$6.96704)} & {\scriptsize ($-$6.89602, $-$6.71202)} & {\scriptsize ($-$6.82481, $-$6.69228)} & {\scriptsize ($-$6.90845, $-$6.78334)} \\
\addlinespace[3pt]
$\alpha_{[0,300\mathrm{K})}$ & 0.00040 & 0.00039 & 0.00042 & 0.00033 & \textbf{0.00029} & 0.00028 & 0.00040 \\
 & {\scriptsize (0.00036, 0.00044)} & {\scriptsize (0.00036, 0.00043)} & {\scriptsize (0.00038, 0.00046)} & {\scriptsize (0.00030, 0.00037)} & {\scriptsize (0.00026, 0.00032)} & {\scriptsize (0.00025, 0.00031)} & {\scriptsize (0.00037, 0.00043)} \\
\addlinespace[3pt]
$\alpha_{[300\mathrm{K},\infty)}$ & 0.00023 & 0.00018 & 0.00025 & 0.00065 & \textbf{0.00064} & 0.00076 & 0.00088 \\
 & {\scriptsize (0.00017, 0.00031)} & {\scriptsize (0.00012, 0.00027)} & {\scriptsize (0.00019, 0.00036)} & {\scriptsize (0.00060, 0.00070)} & {\scriptsize (0.00059, 0.00070)} & {\scriptsize (0.00071, 0.00082)} & {\scriptsize (0.00085, 0.00090)} \\
\addlinespace[3pt]
$\gamma_{\mathrm{urban}}$ & 0.07531 & 0.05378 & 0.05164 & 0.04569 & \textbf{0.03611} & 0.02430 & 0.01330 \\
 & {\scriptsize (0.06744, 0.08689)} & {\scriptsize (0.04792, 0.06109)} & {\scriptsize (0.04627, 0.05929)} & {\scriptsize (0.03765, 0.06005)} & {\scriptsize (0.02695, 0.05801)} & {\scriptsize (0.01856, 0.03579)} & {\scriptsize (0.00901, 0.02063)} \\
\addlinespace[3pt]
$\gamma_{\mathrm{suburban}}$ & 0.11858 & 0.08577 & 0.11679 & 0.07235 & \textbf{0.04932} & 0.03979 & 0.02209 \\
 & {\scriptsize (0.10389, 0.14250)} & {\scriptsize (0.07663, 0.09954)} & {\scriptsize (0.10990, 0.12572)} & {\scriptsize (0.06793, 0.07957)} & {\scriptsize (0.04518, 0.05853)} & {\scriptsize (0.03732, 0.04318)} & {\scriptsize (0.02000, 0.02493)} \\
\addlinespace[3pt]
$\gamma_{\mathrm{rural}}$ & 0.21835 & 0.15478 & 0.15084 & 0.13673 & \textbf{0.12135} & 0.17655 & 0.07860 \\
 & {\scriptsize (0.19393, 0.25007)} & {\scriptsize (0.14232, 0.17272)} & {\scriptsize (0.13646, 0.17118)} & {\scriptsize (0.12302, 0.15766)} & {\scriptsize (0.10514, 0.15563)} & {\scriptsize (0.16047, 0.20891)} & {\scriptsize (0.07068, 0.09236)} \\
\addlinespace[3pt]
$\sigma_{\xi}$ & 0.05051 & 0.05039 & 0.05052 & 0.05079 & \textbf{0.05098} & 0.05126 & 0.05153 \\
 & {\scriptsize (0.04802, 0.05313)} & {\scriptsize (0.04806, 0.05283)} & {\scriptsize (0.04801, 0.05320)} & {\scriptsize (0.04824, 0.05364)} & {\scriptsize (0.04851, 0.05368)} & {\scriptsize (0.04856, 0.05392)} & {\scriptsize (0.04909, 0.05421)} \\
\addlinespace[3pt]
\bottomrule
\end{tabular}}}}

\vspace{0.2cm}
{\footnotesize 
\setlength{\baselineskip}{10pt}
\noindent \emph{Note.}
Column $T$ re-estimates the model on quarters $1$ to $T$; the $T=33$ column (boldface) reproduces the full-sample estimates reported in Table~5 of the paper.
$\rho$ baseline utility; $\alpha$ net-present-value coefficient by home-value segment ($[0,300\mathrm{K})$ and $[300\mathrm{K},\infty)$); $\gamma$ neighborhood effect for a one-mile ring by geographic segment(urban, suburban, rural); $\sigma_{\xi}$ standard deviation of household heterogeneity. Calendar mapping: $T=26$ is 2010Q2, $T=33$ is 2012Q1, and $T=36$ is 2012Q4.
\par}

\end{table}

\begin{figure}[H]
\begin{center}
	\includegraphics[width=0.8\textwidth]{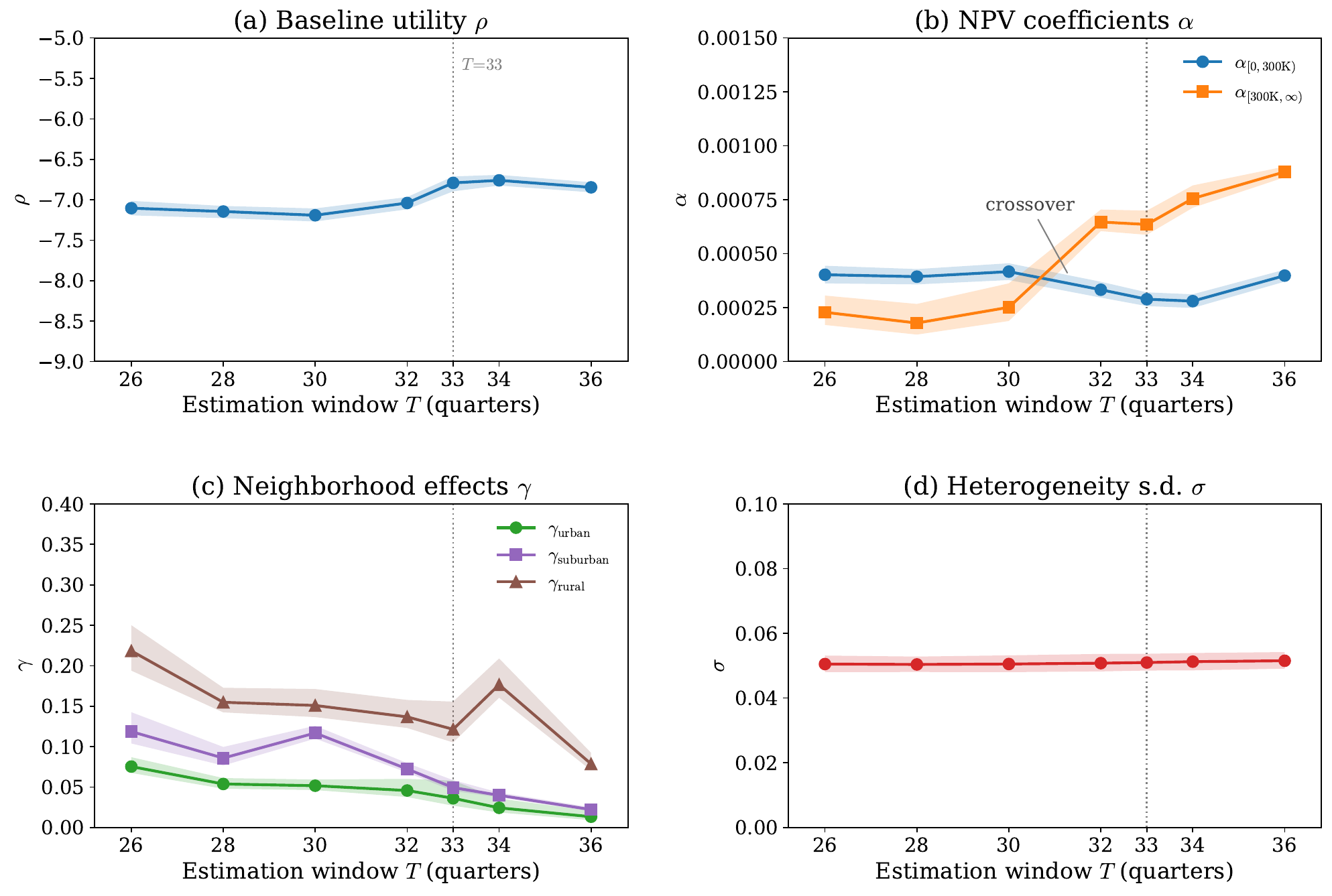}
	\caption{Expanding-window estimates of the structural parameters of Model~(4)H. \label{fig:ECexpwindow}}
\end{center}
{\footnotesize 
	\setlength{\baselineskip}{10pt}
	\noindent \emph{Note.}
	Expanding-window estimates of the structural parameters of Model~(4)H over even windows $T=26$ to $36$ together with the full estimation window $T=33$ reported in the paper. Each panel shows the posterior mean (markers) and 95\% credible interval (shaded band).
	\par}
\end{figure}
\clearpage
\section{Benchmark Model Details}\label{sec:BassModel}


\subsection*{Benchmark (a): Hazard model (individual level)}
This is the reduced-form hazard model applied in \citet{Bollinger2022visibility}, estimated on our data. The discrete hazard benchmark model (15) in Section 5.1 is estimated
by maximum likelihood estimation (MLE), treating PV adoption as $\{0,1\}$ dummy variables over time
in a discrete-time hazard process. For each household-period observation
in which household $i$ has not yet adopted, define $y_{it}=1$
if the household adopts in period $t$, and $y_{it}=0$ otherwise.
The model specifies the conditional adoption probability, or hazard
rate, as
\[
G\left(\lambda_{it}\right)=\xi_{z}+\eta_{t}+\beta_{1}\npv_{it}+\beta_{2}X_{i}+\beta_{3}H_{it}+\beta_{4}\npv_{H_{it}},
\]

\noindent where $G(\cdot)$ is either the logit link $\log\!\left(\lambda_{it}/(1-\lambda_{it})\right)$
or the complementary log-log link $\log\!\left(-\log(1-\lambda_{it})\right)$, both standard
in discrete-time hazard models \citep{Kalbfleisch2002a}; $\xi_{z}$ are zip-code fixed
effects that capture market and geographic factors, $\eta_{t}$ is the temporal effect,
modeled flexibly (polynomial, B-spline, natural spline, or local regression), and $X_{i}$
collects the household characteristics: the total footprint of the house, the ratio of tree
cover to footprint, the sunlight received, the hillshade, and the solar irradiance. The
variables $H_{it}$ and $\npv_{H_{it}}$ denote, respectively, the cumulative number of peer
installations and the cumulative economic value (NPV) of those installations within the
one-mile radial ring. The likelihood contribution for household $i$ in period $t$ is
therefore $\lambda_{it}^{y_{it}}(1-\lambda_{it})^{1-y_{it}}$.
The MLE maximizes the log-likelihood
\[
\sum_{i,t}\left(y_{it}\log(\lambda_{it})+(1-y_{it})\log(1-\lambda_{it})\right),
\]
over all household-period observations in the risk set by choosing
the optimal coefficients $\left\{ \beta_{1},\ldots,\beta_{4}\right\}$ and
fixed effects $\left\{ \xi_{z},\eta_{t}\right\} $.

The estimated coefficients of the hazard benchmark, which uses the complementary log-log link function and fixed effects for $\xi_{z}$ and $\eta_{t}$, are provided in Table~\ref{tab:hazardCoeffs} (we omit the fixed-effect estimates).

\begin{table}[H]
	\begin{center}
		\caption{Estimated coefficients of the hazard benchmark model (b).\label{tab:hazardCoeffs}}
		{\scriptsize
			\begin{tabular}{lrrr}
				\toprule
				Coefficients & Estimate & S.E. & p-value\bigstrut\\
				\midrule
				Intercept & $-7.23$ & $1.61$ & $7.34\times10^{-6}$\\
				$\npv_{it}$ & $2.15\times10^{-4}$ & $0.12\times10^{-4}$ & $\simeq0$\\
				$\npv_{H_{it}}$ & $-9.19\times10^{-6}$ & $1.08\times10^{-6}$ & $\simeq0$\\
				$H_{it}$ & $3.57\times10^{-2}$ & $0.18\times10^{-2}$ & $\simeq0$\\
				$\text{FootPrint2TreeCover}_{i}$ & $2.93\times10^{-4}$ & $0.72\times10^{-4}$ & $4.81\times10^{-5}$\\
				$\text{SunlightReceived}_{i}$ & $4.47\times10^{-4}$ & $0.64\times10^{-4}$ & $2.71\times10^{-12}$\\
				$\text{Hillshade}_{i}$ & $1.76\times10^{-2}$ & $0.84\times10^{-2}$ & $0.036$\\
				$\text{SolarIrradiance}_{i}$ & $-0.41\times10^{-2}$ & $0.18\times10^{-2}$ & $0.027$\\
				\bottomrule
			\end{tabular}
		}
	\end{center}
\end{table}

\subsection*{Benchmark (b): Myopic choice model (static NPV logit)}
\begin{description}
	\item[]Utility for adopting:
 $u_{i1t}(\pvCost_t,\bm\sInstBase_{it},\randomEffects_{i},\latentUtility_{i1t};\pBaseUtility, \pMMoney_{e_i},\bm\pMPE_{g_i})=   \pBaseUtility + \pMMoney_{e_i} \npv_{it}(\pvCost_t) +\bm\pMPE_{g_i}'\bm\sInstBase_{it} + \randomEffects_{i}+\latentUtility_{i1t}$.
\item[]Utility for not adopting: $u_{i0t}(\latentUtility_{i0t}) = \latentUtility_{i0t}$.
\end{description}

Then, at iteration $(r)$ of the MCMC algorithm, the probability that household $i$ adopts in period $t$ is:
\begin{align*}
	\hat p^{(r)}_{i1t}(c_t,\bm{h}_{it}, \xi^{(r)}_{i};\bm\theta^{(r)}) &= \frac{\exp u^{(r)}_{i1t} \left({c}_{t},\bm{h}_{it},\xi^{(r)}_{i}; \bm\theta^{(r)}\right)}{ 1 +\exp u^{(r)}_{i1t}\left({c}_{t},\bm{h}_{it},\xi^{(r)}_{i} ; \bm\theta^{(r)}\right)}.
\end{align*}

\subsection*{Benchmark (c): Partially forward-looking model}
This model assumes that each household $i$ at period $t$ projects the neighborhood installed base as constant over the planning horizon, i.e., $\bm\sInstBase_{i,t+1} = \bm\sInstBase_{it}$, while remaining forward-looking about the net present value of its own system, forecasting the future net cost with equation (7) of the paper.

\subsection*{Benchmark (d): Bass diffusion model (aggregate data)}
We estimated a classic Bass model at the aggregate level, and we projected the number of adopters in out-of-sample quarters. Let $N_t$ be the number of adopters in period $t$, $M$ the market size, $p$ the coefficient of innovation, and $q$ the coefficient of imitation. The regression to estimate is:  
\begin{align}
	N_t &= M \frac{(p+q)^2}{p} \frac{\exp(-(p+q)t)}{\left(1 + \frac{q}{p} \exp(-(p+q)t)\right)^2}\nonumber.
\end{align}
We use a nonlinear least squares regression (R software, nls \{stats\} function) to estimate the parameters of this model. We obtain $p = 0.0005006$ and $q = 0.0464439$. The out-of-sample MAE is 106.172 adopters and the MAPE is 47.208\%. The Bass model performs poorly because the diffusion process in this dataset is at an early stage and does not yet attain the \textit{S}-shape that the Bass equation fits.

Table~\ref{tab:ECbenchmarks} below reports the out-of-sample MAE and MAPE for the benchmark models (a)--(d) and Model~(4)H of Section~5.1 of the paper.

\begin{table}[H]
	\begin{center}
		\caption{Out-of-sample predictive performance of the benchmark models.\label{tab:ECbenchmarks}}
		{\footnotesize
			\begin{tabular}{clcc}
				\toprule
				\textbf{Model} & \textbf{Specification} & \multicolumn{1}{c}{\textbf{MAE}} & \multicolumn{1}{c}{\textbf{MAPE [\%]}}\bigstrut\\
				\midrule
				(4)H & Dynamic discrete choice model (best model) & 31.845 & 15.599 \bigstrut\\
				(a) & Hazard model (individual level) & 41.453 & 21.831 \\
				(b) & Myopic choice model (static NPV logit) & 78.158 & 43.376 \\
				(c) & Partially forward-looking model & 48.167 & 26.352 \\
				(d) & Bass diffusion model (aggregate data) & 106.172 & 47.208 \\
				\bottomrule
			\end{tabular}\\
		}
	\end{center}
\vspace{0.2cm}
{\footnotesize 
\setlength{\baselineskip}{10pt}
\noindent \emph{Note.} MAE, mean absolute error (number of adopters); MAPE, mean absolute percentage error.Metrics are computed on the five out-of-sample quarters following the 33-quarter training window.	
\par}

\end{table}
\clearpage
\section{Details of Differentiated Rebate Policy Simulations}\label{sec:ECtargeting}
\textbf{Geographic differentiation.} We evaluate rebate policies that differentiate the rebate rate across the geographic segments through a complete factorial: the net cost faced by urban, suburban, and rural households is each scaled by a multiplier in $\{0.8, 1.0, 1.2\}$, for $3^3 = 27$ policies; a lower multiplier corresponds to a higher rebate rate. Each policy is one forward simulation of the selected model, Model~(4)H, under a common rebate schedule. The simulation seeds on the first 16 quarters of observed adoption and generates new adoption from quarter~17 onward, so the reported horizon, period~56, corresponds to 10 years of simulated adoption. For each policy we record the cumulative adoption and the program budget at period~56. Figure~7 in the main text plots cumulative adoption against program budget for the 27 policies, the three-square glyph at each point encoding the urban, suburban, and rural multiplier (left, center, and right square) by color; Table~\ref{tab:ECtargeting} summarizes the results as regressions across the 27 policies.

Cumulative adoption rises almost linearly with the program budget, from \$38.4~million and 7{,}837 adopters at the most conservative policy to \$45.0~million and 8{,}480 adopters at the most generous: column~(1) of Table~\ref{tab:ECtargeting} regresses adoption on budget across the 27 policies and gives a slope of 94.9 adopters per million dollars with $R^2 = 0.97$. Column~(2) regresses adoption on the three multipliers directly: lowering any segment's net cost raises total adoption, with the suburban multiplier the strongest lever (75 additional adopters per $0.1$ discount, against 54 for urban and 31 for rural). Column~(3) asks the budget-constrained question by regressing each policy's deviation from the fitted line of column~(1) on the multipliers, and the urban multiplier separates the policies: a $0.1$ urban discount adds about 14 adopters beyond what its budget predicts, whereas suburban and rural discounts move a policy below the line. Accordingly, all nine policies that discount the urban net cost (multiplier $0.8$) lie above the fitted line, by $+13$ to $+52$ adopters ($+32$ on average), whereas the policies that hold the urban net cost at its observed level or increase it lie below the line on average ($-8$ and $-25$ adopters, respectively). This is the pattern visible in Figure~7 of the main text: the glyphs above the fitted relationship have the left (urban) square in green.

The worked examples in Section~6.4 of the paper come from the same simulations. At a budget of about \$41~million, the policy $(0.8, 1.2, 1.0)$ attains the highest adoption in its budget band, 8{,}120 adopters at \$40.87~million. At a budget of about \$42~million, the policy $(0.8, 1.0, 1.2)$ attains 8{,}212 adopters at \$41.90~million, against 8{,}154 adopters at \$41.79~million under the observed uniform policy $(1.0, 1.0, 1.0)$.

\begin{table}[H]
\begin{center}		
\caption{Regression summary of the 27 geographically differentiated rebate policies at period~56. \label{tab:ECtargeting}}
\centering{\footnotesize
\setlength{\tabcolsep}{8pt}\renewcommand{\arraystretch}{1.1}
\begin{tabular}{lrrr}
\toprule
 & (1) & (2) & (3) \\
Dependent variable & Adoption & Adoption & Deviation from budget line \\
\midrule
Budget (\$ million) & 94.86 &  &  \\
 & {\scriptsize (3.15)} &  &  \\
Urban multiplier $m_{\mathrm{urban}}$ &  & $-$544.05 & $-$141.87 \\
 &  & {\scriptsize (9.21)} & {\scriptsize (7.10)} \\
Suburban multiplier $m_{\mathrm{suburban}}$ &  & $-$752.41 & 52.74 \\
 &  & {\scriptsize (9.21)} & {\scriptsize (7.10)} \\
Rural multiplier $m_{\mathrm{rural}}$ &  & $-$310.90 & 41.63 \\
 &  & {\scriptsize (9.21)} & {\scriptsize (7.10)} \\
Constant & 4,198.34 & 9,764.80 & 47.50 \\
 & {\scriptsize (131.64)} & {\scriptsize (16.02)} & {\scriptsize (12.35)} \\
\midrule
$R^2$ & 0.973 & 0.998 & 0.955 \\
Policies & 27 & 27 & 27 \\
\bottomrule
\multicolumn{4}{l}{\scriptsize \emph{Note.} One observation per policy; the uniform policy $(1.0, 1.0, 1.0)$ is the observed rebate schedule.}\\
\end{tabular}}
\end{center}

{\footnotesize 
	\setlength{\baselineskip}{10pt}
	\noindent \emph{Note.} Column~(1) regresses cumulative adoption on the program budget; column~(2) regresses adoption on the three net-cost multipliers; column~(3) regresses the deviation of adoption from the fitted budget line of column~(1) on the multipliers. Standard errors in parentheses. A lower multiplier corresponds to a higher rebate rate, so the negative urban coefficient in column~(3) means that only the urban discount raises adoption beyond what its budget predicts.
	\par}

\end{table}

\textbf{Home-value differentiation.} The same design applies to the two economic segments: the net cost of households with home market value below \$300{,}000 and of those at or above it is each scaled by a multiplier in $\{0.8, 1.0, 1.2\}$, for $3^2 = 9$ policies, each a forward simulation of Model~(4)H evaluated at period~56. Figure~\ref{fig:ECeconglyph} plots cumulative adoption against program budget for the nine policies, the two-square glyph at each point encoding the low- and high-value multiplier (left and right square) by color. The uniform policy $(1.0, 1.0)$ reproduces the geographic baseline exactly, 8{,}154 adopters at \$41.79~million, a cross-check between the two simulation sets.

Adoption again tracks the budget almost one for one: regressing adoption on budget across the nine policies gives a slope of 95.7 adopters per million dollars with $R^2 = 0.998$, so the glyphs hug the fitted line and no home-value policy delivers adoption much beyond what its budget predicts. Relative to the uniform policy, discounting the high-value segment's net cost to $0.8$ adds about 246 adopters and discounting the low-value segment's adds about 82, at 99.1 and 111.3 marginal adopters per incremental million dollars, respectively: the segment that adds more adoption also absorbs proportionally more budget. Differentiating the rebate by home value therefore does not substantially improve adoption for a fixed budget, the conclusion reported in Section~6.4 of the paper.

\begin{figure}[H]
{\centering
\includegraphics[width=0.72\textwidth]{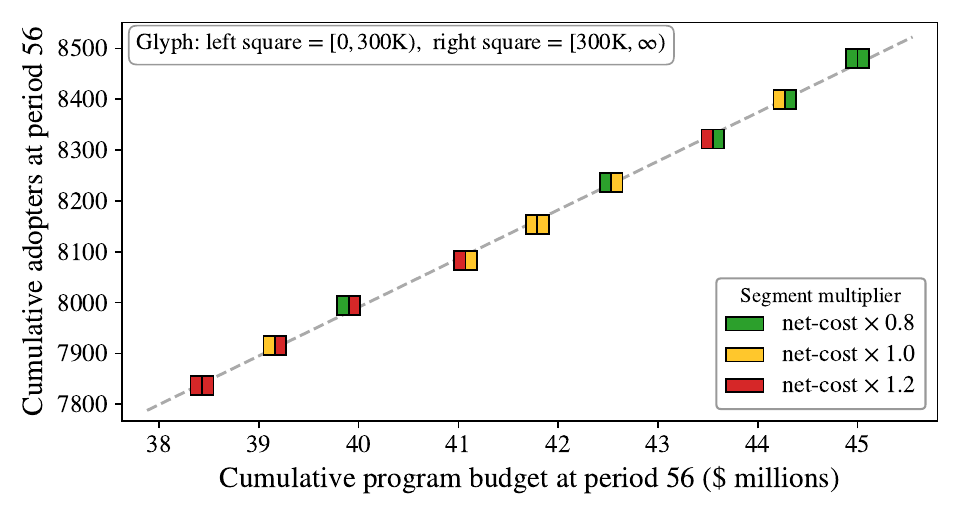}
\caption{Home-value differentiated rebate policies at period~56.\label{fig:ECeconglyph}}
}
{\footnotesize 
	\setlength{\baselineskip}{10pt}
	\noindent \emph{Note.} Cumulative adoption against program budget for the nine economic-segment policies. The two-square glyph at each point encodes the multiplier on the net cost of low-value households ($[0,300\mathrm{K})$, left square) and high-value households ($[300\mathrm{K},\infty)$, right square): green $0.8$, yellow $1.0$, red $1.2$.
	\par}
\end{figure}
\clearpage
\section{End-Horizon Sensitivity Analysis for Policy Simulations}\label{sec:ECrobustness}

The cost-effectiveness comparisons of the rebate switching policies in Section~6.1 are made at period~56, 23 quarters (nearly 6 years) beyond the estimation window. We test whether those comparisons are robust to the end horizon by reconstructing the frontier analyses at five horizons, periods 40, 44, 48, 52, and 56 at four-period steps, equal to 2 through 6 years beyond the last quarter used for estimation (quarter~33). Because a one-step policy RSP($\tau$) that switches at $\tau>T$ does not switch before the end horizon $T$ and hence coincides with the no-switch schedule, we evaluate at each horizon only the policies that switch within it, $\tau\le T$. Figure~\ref{fig:ECrobustness} reconstructs the frontier at each horizon: cumulative installed capacity against program budget on the left, cumulative carbon savings against program budget on the right; Table~\ref{tab:ECrobustness} lists the carbon-maximizing policy and its metrics at each horizon.

Both features of the period-56 frontier reappear across the horizons. Installed capacity rises with budget as a concave curve, so each later switch yields less capacity gain and the costliest policies occupy the flat upper arm. Carbon savings are non-monotone in budget: at every horizon the curve has an interior maximum, with the switches before and after the optimum both below it. At every horizon, the carbon-maximizing one-step policy is an interior switch, never the most expensive policy: RSP(32) at the 2-year horizon and RSP(36) from 3 years onward. The budget at the optimum is about \$30~million at 2 years and about \$33~million from 3 years on, while the carbon level grows from 174.2 at 2 years to 604.2 at 6 years.

The key reason for the interior optimum is the forward-looking behavior of households. Because households know the rebate is reduced at $\tau$, an earlier switch concentrates installations in the quarters before it; that capacity operates sooner and accrues carbon savings over more of the horizon. A later switch installs marginally more capacity by the horizon's end, 38.6~MW for RSP(48) against 36.4~MW for RSP(36) at period~56, yet accumulates less carbon, 552.3 against 604.2, because its capacity arrives later. Cumulative carbon savings, computed as the integral of installed capacity over time, reward the intermediate switch that front-loads adoption. Each policy's program budget plateaus once its rebate is switched off, so it changes little at horizons beyond the switch; the optimum's budget rises across horizons only because the optimal switch time itself moves later.

The counterfactual conclusion therefore holds independently of the period-56 horizon. Evaluated over any window from 2 to 6 years beyond the estimation window, an intermediate switch yields the most carbon savings per budget dollar, installed capacity exhibits diminishing returns to budget, and deferring the switch past the interior optimum raises spending while lowering cumulative carbon. The cost-effective switch advances from RSP(32) at 2 years to RSP(36) from 3 years onward, and it is interior at every horizon.

\begin{table}[H]
\begin{center}
	\caption{Horizon robustness of the one-step cost-effectiveness frontier.\label{tab:ECrobustness}}
	{\footnotesize
		\begin{tabular}{rrlrrr}
			\toprule
			Horizon (period) & Years beyond est. & Carbon-max policy & Budget (\$M) & Installed MW & Carbon savings \\
			\midrule
			40 & 2 & RSP(32) & 29.6 & 15.7 & 174.2 \\
			44 & 3 & RSP(36) & 32.6 & 21.8 & 250.8 \\
			48 & 4 & RSP(36) & 32.6 & 27.1 & 347.9 \\
			52 & 5 & RSP(36) & 32.6 & 32.3 & 466.5 \\
			56 & 6 & RSP(36) & 32.6 & 36.4 & 604.2 \\
			\bottomrule
	\end{tabular}}
\end{center}
\vspace{0.2cm}
{\footnotesize 
	\setlength{\baselineskip}{10pt}
	\noindent \emph{Note.} At each simulation horizon, the carbon-maximizing one-step policy with its program budget, cumulative installed capacity, and cumulative carbon savings (the carbon metric of Figure~5 of the paper). The optimum is RSP(32) at 2 years and RSP(36) from 3 years onward, and at every horizon it is an interior switch; period~56 is the headline horizon, nearly 6 years beyond the estimation window.
	\par}

\end{table}

\begin{figure}[H]
{\centering
\includegraphics[width=\textwidth]{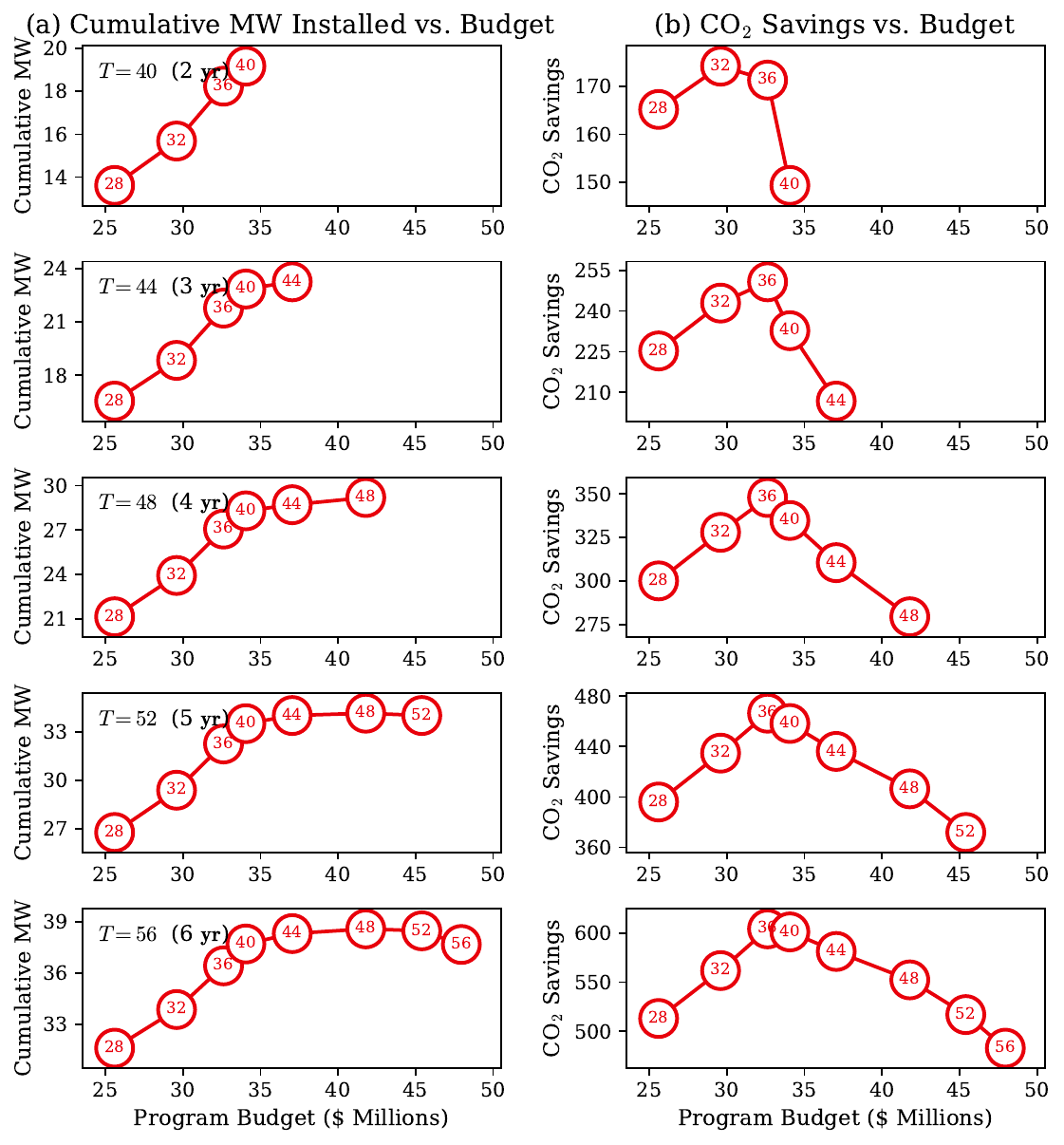}
\caption{One-step cost-effectiveness frontier at five simulation horizons.\label{fig:ECrobustness}}
}

{\footnotesize 
	\setlength{\baselineskip}{10pt}
	\noindent \emph{Note.} Periods 40 to 56 (2 to 6 years beyond the estimation window, whose last quarter is~33). Each row reconstructs the period-56 frontier at one horizon: cumulative installed capacity against program budget (left) and cumulative carbon savings against program budget (right), over the one-step policies RSP($\tau$) that switch within the horizon, $\tau\le T$, each point labeled by $\tau$. Installed capacity is concave in budget at every horizon; carbon savings peak at an interior switch, RSP(32) at 2 years and RSP(36) from 3 years onward.
	\par}
\end{figure}
\end{APPENDICES}



\clearpage
\bibliographystyle{informs2014}
\bibliography{library}
\end{document}